\theoremstyle{plain}
\newtheorem{theorem}{Theorem}[section]
\newtheorem*{theorem*}{Theorem} 
\newtheorem{lemma}[theorem]{Lemma}
\newtheorem{proposition}[theorem]{Proposition}
\theoremstyle{definition}
\newtheorem{assumption}{Assumption}
\newtheorem{remark}{Remark}
\newtheorem{example}{Example}
\numberwithin{equation}{section}
\title{\MakeUppercase{Network double autoregression}}
\author{Tingting Li, Hao Wang}
\date{November 1, 2024}
\DeclareMathOperator{\BIC}{BIC}
\DeclareMathOperator{\Var}{Var}
\newcommand{\argmin}[1]{\mathop{\operatorname{argmin}}_{#1}}
\newcommand{\argmax}[1]{\mathop{\operatorname{argmax}}_{#1}}
\newcommand{\Diag}[1]{\operatorname{Diag}\left\{#1\right\}}
\newcommand{\asequal}{\stackrel{a.s.}{=\mathrel{\mkern -3mu}=}}
\newcommand{\x}{\times}
\newcommand{\R}{\mathbb{R}}
\newcommand{\bx}{\bm{x}}
\newcommand{\by}{\bm{y}}
\newcommand{\bz}{\bm{z}}
\newcommand{\bw}{\mathbf{w}}
\newcommand{\bone}{\bm{1}}
\begin{document}

\maketitle

\begin{abstract}
    Modeling high-dimensional time series with simple structures 
    is a challenging problem. 
    This paper proposes a network double autoregression (NDAR) model, 
    which combines the advantages of network structure 
    and the double autoregression (DAR) model, 
    to handle high-dimensional, conditionally heteroscedastic, 
    and network-structured data within a simple framework. 
    The parameters of the model are estimated 
    using quasi-maximum likelihood estimation, 
    and the asymptotic properties of the estimators are derived. 
    The selection of the model's lag order will be based on the Bayesian information criterion. 
    Finite-sample simulations show that the proposed model performs well 
    even with moderate time dimensions and network sizes. 
    Finally, 
    the model is applied to analyze three different categories of stock data. 
    \\[10pt]
    \textbf{Keywords:} High-dimensional, Double autoregression, Network structure, Quasi-maximum likelihood estimation
\end{abstract}

\section{Introduction}

Time series models play a crucial role in fields such as finance, economics, and econometrics. 
Researchers often need to model time series data to reveal the correlation structure, 
as well as to forecast future trends in daily returns of 
stock prices, exchange rates, and other financial assets. 
Classical time series models include autoregressive (AR) models, 
moving average (MA) models, 
and their combined form, 
the autoregressive moving average models (ARMA). 
These models are primarily used to describe the lag effects 
and short-term random shocks within a time series. 
However, 
traditional models exhibit significant limitations in handling complex dependencies 
and conditional heteroscedasticity. 
As a result, 
the autoregressive conditional heteroskedasticity (ARCH) model was introduced by \cite{engle1982autoregressive}, 
and \cite{bollerslev1986generalized} extended it to the generalized ARCH (GARCH) model 
to effectively capture the volatility clustering characteristics in time series.

With the ongoing development of time series analysis, 
researchers have gradually expanded univariate models to multivariate settings, 
proposing models like vector autoregression (VAR) 
and multivariate GARCH (MGARCH) (\cite{bollerslev1990modelling, engle1995multivariate}). 
These models can capture the interdependencies among multi-dimensional time series. 
However, 
the complexity of parameter estimation and inference significantly increases in high-dimensional time series, 
particularly as the number of parameters grows polynomially with dimensions. 
To address these challenges, 
some researchers have introduced structural constraints, 
such as the constant or dynamic conditional correlation GARCH model (\cite{bollerslev1990modelling, tse2002multivariate}), 
or dimension-reduction techniques (\cite{engle1990asset, hu2014principal, li2016modeling}) 
to reduce parameter complexity 
and improve computational efficiency. 

To balance the simplicity and flexibility of time series models, 
the double autoregressive (DAR) model (\cite{ling2004estimation, ling2007double}), 
as a special case of the AR-ARCH models (\cite{weiss1986asymptotic}), 
has increasingly attracted researchers' attention. 
The DAR model, 
which simultaneously models the conditional mean and conditional variance of time series, 
provides a simple and powerful framework: 
\[
y_t = \sum_{i=1}^{p} \alpha_i y_{t-i} + \eta_t \sqrt{\omega + \sum_{i=1}^{q} \beta_i y_{t-i}^2}, 
\]
where the conditional expectation and variance dynamically depend only on past observations. 
It has been shown that the DAR model only requires the existence of fractional moments 
when using quasi-maximum likelihood estimation (QMLE). 
Lower moments of \(y_t\) imply a larger parameter space (\cite{ling2007double}). 
In contrast, 
the ARMA-GARCH model demands stricter asymptotic distribution requirements for QMLE, 
necessitating the existence of fourth-order moments (\cite{francq2004maximum}). 
Based on the novelty of the DAR model, 
various extensions have been proposed, 
such as other loss functions, 
quasi-maximum exponential likelihood estimators (\cite{zhu2013quasi}), 
and quantile loss (\cite{zhu2022quantile}) 
to weaken the model's assumptions. 
More varieties of DAR models also have been developed, 
including double AR model without intercept (DARWIN) (\cite{li2019double}), 
threshold double autoregressive (TDAR) models (\cite{li2015asymptotic, li2016threshold}), 
linear double autoregressive (LDAR) models (\cite{zhu2018linear, li2019double, tan2022asymmetric}), 
mixture double autoregressive models (\cite{li2017mixture}), 
and augmented double autoregressive models (\cite{jiang2020non}). 
Like MGARCH models, in further extending the DAR model to a multivariate setting, 
the vector DAR model (\cite{zhu2017vector}) 
and vector linear DAR model (\cite{lin2024vector}) have proposed.
But for higher dimensional sequences, 
how to deal with the curse of dimensionality remains a challenge.

In recent years, 
the introduction of network structure data 
has provided a new perspective for large-scale time series modeling. 
The network vector autoregression (NVAR), 
proposed by \cite{zhu2017network}, 
analyzes social network data by specifying attention-following relationships. 
Subsequently, 
various models have been proposed (\cite{zhu2019network, chen2023network, xu2024dynamic, armillotta2024count}). 
To broaden the application scope of single-network structures, 
models with grouped structures have also been introduced (\cite{zhu2020grouped, zhu2023simultaneous}). 
At the same time, 
GARCH-type models can also be extended (\cite{zhou2020network, pan2024threshold}). 
These models simplify and effectively reduce computational complexity 
from \(O(N^2)\) to \(O(N)\) 
for data involving \(N\) nodes 
by using the connectivity information among nodes in the network, 
effectively capturing specific correlation structures within high-dimensional time series data, 
when dealing with large node sequences.

This paper primarily investigates embedding network structures into the DAR model,
where both the mean and volatility components are embedded with the same network structure, 
enriching the model's application in higher dimensional time series. 
The remainder of this paper is organized as follows. 
Section 2 introduces first-order and higher orders NDAR model, stationarity and ergodicity conditions, 
Quasi-Maximum Likelihood Estimation, asymptotic results, and BIC for model selection; 
Section 3 demonstrates the model's estimation performance on small samples through numerical simulation experiments; 
Section 4 applies the model to real stock market data. 
Technical details are provided in the appendix.

\section{Network double autoregression}

\subsection{Model and notation}
Consider an adjacency matrix \( A = (a_{ij}) \in \mathbb{R}^{N \times N} \) 
representing a network with \( N \) nodes, 
where \( a_{ij} = 1 \) if node \( i \) is connected to node \( j \), 
and \( a_{ij} = 0 \) otherwise. 
The matrix is defined such that \( a_{ii} = 0 \) for all \( 1 \leq i \leq N \). 
In practical applications, 
\( a_{ij} = 1 \) indicates that node \( i \) is influenced by node \( j \), 
and mutual influence between nodes \( i \) and \( j \) is represented by \( a_{ij} = a_{ji} = 1 \).
Each node \(  i  \) at time point \( t \) is associated with a response \( y_{it}\in\R^{1} \).
Then the network double autoregression model with order \( (1, 1) \), or called NDAR\( (1,1) \), is defined as
\begin{equation}\label{eq:NDAR_it}
  \begin{aligned}
    y_{it} = \alpha n_{i}^{-1} \sum_{j=1}^{N} a_{ij} y_{j,t-1} + \beta y_{i, t-1} 
            + \eta_{it}\sqrt{\omega + \phi n_{i}^{-1} \sum_{j=1}^{N} a_{ij} y_{j,t-1}^2 + \psi y_{i,t-1}^{2}}
  \end{aligned}\ ,
\end{equation}
 where \( n_{i} = \sum_{j=1}^{N}a_{ij} \), particularly delete node \( i \) if \( n_{i}=0 \),
 and the collection \(\{\eta_{it}\}\) consists of independent and identically distributed random variables 
 with \( E\eta_{it} = 0 \) and \( \Var(\eta_{it}) = 1 \) for parameter identifiability.
The \( y_{it} \) is influenced by four parts. 
In the mean component, 
 the term \( n_{i}^{-1} \sum_{j=1}^{N} a_{ij} y_{j,t-1} \) represents the average impact from neighbors, 
 and \( y_{j,t-1} \) is the standard autoregressive term. 
In the volatility component, 
 the terms \( n_{i}^{-1} \sum_{j=1}^{N} a_{ij} y_{j,t-1}^2 \) and \( y_{i,t-1}^{2} \) 
 are also the average impact from neighbors and the standard autoregressive term, respectively. 
The parameters \( \alpha, \beta, \omega, \phi, \) and \( \psi \) are unknown.
To ensure positivity in the square root,
  \( \omega > 0 \) and \( \phi, \psi \geq 0 \) are assumed.

For convenience, define the following notations: 
 \( \bm{y}_{t} = (y_{1t}, \ldots, y_{Nt})' \), 
 \( x_{it} = y_{it}^2 \), 
 \( \bm{x}_{t} = (x_{1t}, \ldots, x_{Nt})' \), 
 \( \bm{\eta}_{t} = (\eta_{1t}, \ldots, \eta_{Nt})' \), 
 and the row-normalized adjacency matrix \( W = (\bw_1, \ldots, \bw_N)' \), 
 where \( \bm{w}_{i} = n_{i}^{-1}(a_{i1}, \ldots, a_{iN})' \). 
Then the model \eqref{eq:NDAR_it} can be expressed as
\begin{equation}\label{eq:NDAR_it_vec}
  y_{it} = \alpha \bw_{i}'\by_{t-1} + \beta y_{i, t-1}
     + \eta_{it}\sqrt{\omega + \phi \bw_{i}'\bx_{t-1} + \psi x_{i,t-1}},
\end{equation}
 or a more compact form
\begin{equation}\label{eq:NDAR_it_vec_compact}
  \begin{cases}
    \begin{aligned}
      \by_{t} &= \alpha W\by_{t-1} + \beta \by_{t-1} + H_{t}^{1/2}\bm{\eta}_{t}, \\
      H_t &= \Diag{\omega\bone + \phi W\bm{x}_{t-1} + \psi \bm{x}_{t-1}},
    \end{aligned}
  \end{cases}
\end{equation}
 where
 \( \bone = \bone_N \) is an \( N \times 1 \) vector of ones 
 and \( \Diag{\bm{a}} \) denotes a diagonal matrix with \( \bm{a} \) as its diagonal.
To model more general cases, 
 the NDAR\( (p, q) \) model is defined as
 \begin{equation}\label{eq:NDAR_pq}
  \begin{cases}
    \begin{aligned}
      \by_{t} &= \sum_{r=1}^{p} \alpha_r W\by_{t-1} + \sum_{r=1}^{q}\beta_r \by_{t-1} + H_{t}^{1/2}\bm{\eta}_{t}, \\
      H_t &= \Diag{\omega\bone + \sum_{r=1}^{p} \phi_r W\bm{x}_{t-1} + \sum_{r=1}^{q}\psi_r \bm{x}_{t-1}}.
    \end{aligned}
  \end{cases}
 \end{equation}

  Although the NDAR\( (p, q) \) model defined as \eqref{eq:NDAR_pq} 
  use same order \( p \) (\( q \)) in both the mean and volatility components,
  it is possible to consider different orders.
  In applications, one can use \( p_1 \) (\( q_1 \)) in the mean component 
  and \( p_2 \) (\( q_2 \)) in the volatility component
  but condition \( p_1\leq p_2 \) (\( q_1\leq q_2 \))  must be satisfied, 
  as can be seen in proof section, 
  to ensure that volatility term can control the mean term of
  the log-likelihood function, score function, and information matrix.

\subsection{Strict stationarity and ergodicity}
It can be observed that model \ref{eq:NDAR_pq} is a special case of the following model:
\begin{equation}
  \bm{y}_t = f(\bm{y}_{t-1}, \ldots, \bm{y}_{t-p}) + H(\bm{y}_{t-1}, \ldots, \bm{y}_{t-p})\bm{\eta}_t,
\end{equation}
where \( f \) is a vector-valued function, and \( H \) is a matrix-valued function.
Based on the conditions (B1)-(B4) and Theorem 1 in \cite{lu2001l1}
the following proposition can be obtained.

\begin{proposition}\label{prop:strictly_stationary}
  The NDAR\( (p, q) \) model defined as \eqref{eq:NDAR_pq} is strictly stationary and geometrically ergodic
  if 

  (i) the random vectors \( \{\bm{\eta}_t\} \) are independent identically distributed, 
  absolutely continuous and have a density which is
  positive almost everywhere with \( E(\Vert\bm{\eta}_t\Vert_{1})<\infty \);

  (ii) \( \max_{j=1,\ldots,N}\left\{ \varUpsilon_{j} \right\}<1 \), where
  \[ 
    \varUpsilon_{j} = \sum_{r=1}^{p}|\alpha_{r}|\left( \sum_{i=1}^{N}\frac{a_{ij}}{n_{i}} \right) + \sum_{r=1}^{p}|\beta_{r}| + E|\eta|\left[ \sum_{r=1}^{p}\sqrt{\phi_{r}}\left( \sum_{i=1}^{N}\frac{a_{ij}}{\sqrt{n_{i}}} \right) + \sum_{r=1}^{p}\sqrt{\psi_{r}}  \right].
  \]
\end{proposition}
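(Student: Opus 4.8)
The plan is to exhibit \eqref{eq:NDAR_pq} as a Markov chain and then verify the hypotheses (B1)--(B4) of Theorem 1 in \cite{lu2001l1}, which deliver strict stationarity together with geometric ($L_1$) ergodicity. First I would stack the state, $\bm Y_t = (\by_t', \by_{t-1}', \ldots, \by_{t-m+1}')'$ with $m=\max(p,q)$, so that $\{\bm Y_t\}$ is a homogeneous first-order Markov chain of the general form $\by_t = f(\by_{t-1},\ldots,\by_{t-m}) + H_t^{1/2}\bm{\eta}_t$. Condition (i) then supplies the soft hypotheses of (B1): because $\omega>0$ forces $H_t\succeq\omega I$ to be positive definite and $\bm{\eta}_t$ has a density that is positive almost everywhere, the one-step transition kernel has a density positive on all of $\R^N$, giving $\phi$-irreducibility with respect to Lebesgue measure and aperiodicity; the moment bound $E\Vert\bm{\eta}_t\Vert_1<\infty$ keeps the conditional-mean estimates below finite.

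The core of the proof is the geometric drift inequality, which is where condition (ii) enters. Letting $\mathcal F_{t-1}$ denote the $\sigma$-field of the past and $|\by|$ the vector of coordinatewise absolute values, I would bound, for each node $i$,
\[
E\big[\,|y_{it}|\,\big|\,\mathcal F_{t-1}\big]\le \sum_{r=1}^{p}\Big(|\alpha_r|\,\bw_i'|\by_{t-r}|+|\beta_r|\,|y_{i,t-r}|\Big)+E|\eta|\,\sqrt{\omega+\sum_{r=1}^{p}\phi_r\,\bw_i'\bx_{t-r}+\sum_{r=1}^{p}\psi_r\,x_{i,t-r}},
\]
using the triangle inequality on the mean part and $E|\eta_{it}|=E|\eta|$ on the innovation. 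Two elementary inequalities linearise the square root: subadditivity $\sqrt{a+b}\le\sqrt a+\sqrt b$ under the root, together with $\sqrt{\bw_i'\bx_{t-r}}=\sqrt{\sum_j w_{ij}y_{j,t-r}^2}\le\sum_j\sqrt{w_{ij}}\,|y_{j,t-r}|$, where crucially $\sqrt{w_{ij}}=a_{ij}/\sqrt{n_i}$ since $a_{ij}\in\{0,1\}$. Summing over $i$ and interchanging the order of summation converts the network weights into the column sums $\sum_i a_{ij}/n_i$ and $\sum_i a_{ij}/\sqrt{n_i}$, producing
\[
E\big[\,\Vert\by_t\Vert_1\,\big|\,\mathcal F_{t-1}\big]\le N\sqrt{\omega}\,E|\eta|+\sum_{r=1}^{p}\sum_{j=1}^{N}c_{jr}\,|y_{j,t-r}|,\qquad c_{jr}=|\alpha_r|\sum_i\tfrac{a_{ij}}{n_i}+|\beta_r|+E|\eta|\Big(\sqrt{\phi_r}\sum_i\tfrac{a_{ij}}{\sqrt{n_i}}+\sqrt{\psi_r}\Big),
\]
and the coefficients satisfy exactly $\sum_{r=1}^{p}c_{jr}=\varUpsilon_j$. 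This is precisely how the quantities $\varUpsilon_j$ arise.

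It remains to turn this $p$-lag majorant into a one-step geometric drift for $\bm Y_t$. Collecting the $c_{jr}$ into nonnegative matrices $B_r$, so that $E[\,|\by_t|\mid\mathcal F_{t-1}]\le\sum_r B_r|\by_{t-r}|+E|\eta|\sqrt{\omega}\,\bone$ coordinatewise, the identity above says that the $j$-th column sum of $\sum_r B_r$ equals $\varUpsilon_j$; hence its matrix $1$-norm is $\Vert\sum_r B_r\Vert_1=\max_j\varUpsilon_j<1$, and a fortiori its spectral radius is below one. Because the blocks are nonnegative, this forces the block-companion matrix of $(B_1,\ldots,B_p)$ --- which governs the linear majorant $\bm v_t=\sum_r B_r\bm v_{t-r}$ --- to have spectral radius strictly below one as well. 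Consequently there is a weighted test function $V(\bm Y_t)=1+\sum_{r=0}^{m-1}\lambda_r\Vert\by_{t-r}\Vert_1$ with positive weights read off from that companion structure satisfying $E[V(\bm Y_t)\mid\bm Y_{t-1}]\le\rho\,V(\bm Y_{t-1})+C$ for some $\rho<1$ and finite $C$, which is the drift demanded by (B2)--(B4). Lu's Theorem 1 then yields geometric ergodicity, and the existence of the invariant law gives strict stationarity.

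I expect the main obstacle to be the volatility term rather than the mean: one must dominate $\sqrt{\bw_i'\bx_{t-r}}$ by a \emph{linear} functional of $|\by_{t-r}|$ carrying the correct network weights, and then verify that summing the per-lag, per-node coefficients reproduces the single sharp threshold $\max_j\varUpsilon_j<1$ instead of a coarser quantity such as $\sum_r\max_j c_{jr}$. It is exactly this point that forces the column-sum / nonnegative block-companion argument above in place of a naive lag-by-lag estimate, and that ties the stability region to the degree-weighted network quantities $\sum_i a_{ij}/n_i$ and $\sum_i a_{ij}/\sqrt{n_i}$.
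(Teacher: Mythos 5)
Your proposal is correct and follows the same route as the paper: both reduce the NDAR$(p,q)$ model to the vector nonlinear AR--ARCH form $\bm{y}_t = f(\bm{y}_{t-1},\ldots,\bm{y}_{t-p}) + H(\bm{y}_{t-1},\ldots,\bm{y}_{t-p})\bm{\eta}_t$ and invoke Theorem~1 of \cite{lu2001l1}, with condition (ii) arising as the column-sum form of the coefficient condition (B2). The paper's own proof stops at citing (B2) directly, whereas you carry out the verification explicitly (the subadditivity of the square root, the identity $\sqrt{a_{ij}/n_i}=a_{ij}/\sqrt{n_i}$, and the interchange of summation producing $\sum_i a_{ij}/n_i$ and $\sum_i a_{ij}/\sqrt{n_i}$), all of which is sound.
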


\section{Model estimation}
\subsection{Quasi-maximum likelihood estimation}
In this subsection, we let \( (p,q) \) be the true order for convinence,
the choice of order is considered in the next subsection.
Denote model parameters \( \theta = (\mu', \sigma')\in\R^{2p+2q+1} \),
where \( \mu = (\alpha_1, \ldots, \alpha_p, \beta_1, \ldots, \beta_q)' \)
and \( \sigma = (\omega, \phi_1, \ldots, \phi_p, \psi_1, \ldots, \psi_q)' \).
The true parameter is denoted by \( \theta_0 = (\mu_0', \sigma_0') \),
where \( \mu_0 = (\alpha_{10}, \ldots, \alpha_{p0}, \beta_{10}, \ldots, \beta_{q0})' \)
and \( \sigma_0 = (\omega_0, \phi_{10}, \ldots, \phi_{p0}, \psi_{10}, \ldots, \psi_{q0})' \).
Let \( m=\max\{p, q\} \),
assume that response \( \{\by_{1-m},\ldots, \by_{1}, \ldots, \by_{T}\} \)
come from a NDAR\( (p, q) \) process
with true parameter \( \theta_0 \) and a fixed adjacency matrix \( A \).
Denote the regressive variables about node \( i \) at time \( t \) by
\begin{equation*}
    \begin{aligned}
        \bz_{1it} & = \left( \bw'_{i}\by_{t-1}, \ldots, \bw'_{i}\by_{t-p}, y_{i,t-1}, \ldots, y_{i,t-q} \right)' \text{ and} \\
        \bz_{2it} & = \left( 1, \bw'_{i}\bx_{t-1}, \ldots, \bw'_{i}\bx_{t-p}, x_{i,t-1}, \ldots, x_{i,t-q} \right)',         \\
    \end{aligned}
\end{equation*}
then the quasi-maximum likelihood estimator (QMLE) is defined as
\begin{equation}\label{eq:qmle}
    \hat{\theta} = \argmax{\theta\in\Theta}\sum_{t=1}^{T}L_t(\theta),\;
    L_t(\theta) = -\frac{1}{2}\sum_{i=1}^{N}\left( \log{h_{it}(\sigma)}
    + \frac{\epsilon_{it}^{2}(\mu)}{h_{it}(\sigma)} \right),
\end{equation}
where \( \Theta \) is the parameter space,
\( \varepsilon_{it}(\mu) = y_{it} - \bz_{1it}'\mu \) and
\( h_{it}(\sigma) = \bz_{2it}'\sigma \).

\begin{assumption}\label{assumption:y}
    \( \{\bm{y}_t\} \) is strictly stationary and ergodic,
    and exists \( u>0 \) such that \( E\left\lVert \by_t \right\rVert^u<\infty \).
\end{assumption}
\begin{assumption}\label{assumption:space}
    The parameter space \( \Theta \) is compact
    and \( \theta_0=(\mu_0', \sigma_0')' \) is an interior point of \( \Theta \)
    with \( |\alpha_{i}|, |\beta_{j}| < \bar{b}_1 \)
    and \( 0 < \underline{b}_2 < \omega, \phi_{i}, \psi_{j} < \bar{b}_2 \)
    for all \( i=1,\ldots,p \) and \( j=1,\ldots,q \).
\end{assumption}
\begin{assumption}\label{assumption:epsilon}
    \( \{\eta_{it}\} \) is i.i.d. across \( i \) and \( t \) with zero mean and unit variance.
    and the matrix
    \( D = \begin{pmatrix}
        1 & \frac{\kappa_3}{\sqrt{2}} \\ \frac{\kappa_3}{\sqrt{2}} & \frac{\kappa_4-1}{2}
    \end{pmatrix} \) is positive definite,
    where \( \kappa_3 = E\eta_{it}^{3} \) and \( \kappa_4 = E\eta_{it}^{4}<\infty \).
\end{assumption}

\begin{theorem}\label{theorem:normal}
    If Assumptions \ref{assumption:y}--\ref{assumption:epsilon} hold,
    then \( \sqrt{NT}(\hat{\theta} - \theta_0) \rightarrow_{d} \mathcal{N}(\Omega^{-1}_{0}\Sigma_{0}\Omega^{-1}_{0}) \),
    where \( \rightarrow_{d} \) denote convergence in distribution,
    \begin{equation*}
            \Omega_0 = \frac{1}{N}\sum_{i=1}^{N}E \left[ \varGamma_{it}(\theta_0)\varGamma'_{it}(\theta_0) \right], \quad 
            \Sigma_0 = \frac{1}{N}\sum_{i=1}^{N}E \left[ \varGamma_{it}(\theta_0) D \varGamma'_{it}(\theta_0) \right], 
    \end{equation*}
    and 
    \[
        \varGamma_{it}(\theta_0) = \Diag{\frac{\bz_{1it}}{\sqrt{h_{it}(\sigma_0)}}, \frac{\bz_{2it}}{\sqrt{2}h_{it}(\sigma_0)}}
    .\]
\end{theorem}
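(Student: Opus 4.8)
The plan is to follow the classical three-step argument for quasi-maximum likelihood estimation: first establish consistency $\hat\theta\to_p\theta_0$, then linearize the score at $\theta_0$, and finally treat the normalized score by a martingale central limit theorem and the average Hessian by a law of large numbers. For consistency I would show that $(NT)^{-1}\sum_t L_t(\theta)$ converges uniformly on the compact set $\Theta$ (Assumption \ref{assumption:space}) to a limiting criterion uniquely maximized at $\theta_0$, using the uniform ergodic theorem supplied by Assumption \ref{assumption:y} and identifiability of $\theta_0$ from the regressors $\bz_{1it},\bz_{2it}$. Since $\theta_0$ is interior, the score vanishes at $\hat\theta$ eventually, and a mean-value expansion gives
\[
 0=\frac{1}{\sqrt{NT}}\sum_{t=1}^{T}\frac{\partial L_t(\theta_0)}{\partial\theta}
   +\left[\frac{1}{NT}\sum_{t=1}^{T}\frac{\partial^2 L_t(\bar\theta)}{\partial\theta\,\partial\theta'}\right]\sqrt{NT}(\hat\theta-\theta_0),
\]
with $\bar\theta$ on the segment joining $\hat\theta$ and $\theta_0$. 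It then remains to prove that the normalized score is asymptotically $\mathcal N(0,\Sigma_0)$ and that the bracketed average Hessian converges in probability to $-\Omega_0$.

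For the score, writing $\ell_{it}$ for the node-$i$ summand of $L_t$, differentiating and evaluating at $\theta_0$ (where $\varepsilon_{it}(\mu_0)=\eta_{it}\sqrt{h_{it}(\sigma_0)}$) expresses the node-$i$ contribution as $\varGamma_{it}(\theta_0)s_{it}$ with $s_{it}=(\eta_{it},(\eta_{it}^2-1)/\sqrt2)'$; the $\mu$-block is $\eta_{it}\bz_{1it}/\sqrt{h_{it}(\sigma_0)}$ and the $\sigma$-block is $(\eta_{it}^2-1)\bz_{2it}/(2h_{it}(\sigma_0))$. A direct computation using $E\eta_{it}=0$ and $\Var(\eta_{it})=1$ yields $E(s_{it})=0$ and $E[s_{it}s_{it}']=D$, the matrix of Assumption \ref{assumption:epsilon}. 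Setting $g_t=\sum_{i=1}^{N}\varGamma_{it}(\theta_0)s_{it}$ and letting $\mathcal F_t$ be the natural filtration, $\varGamma_{it}(\theta_0)$ is $\mathcal F_{t-1}$-measurable while $s_{it}$ is independent of $\mathcal F_{t-1}$, so $\{g_t,\mathcal F_t\}$ is a stationary ergodic martingale-difference sequence. Independence of the $\eta_{it}$ across $i$ kills cross terms, giving $E[g_tg_t'\mid\mathcal F_{t-1}]=\sum_i\varGamma_{it}(\theta_0)D\varGamma_{it}'(\theta_0)$, whose ergodic average converges to $\Sigma_0$; the martingale CLT then delivers $(NT)^{-1/2}\sum_t g_t\to_d\mathcal N(0,\Sigma_0)$.

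For the Hessian, a second differentiation shows that at $\theta_0$ the $\mu$–$\sigma$ cross block of $E[\partial^2\ell_{it}/\partial\theta\,\partial\theta']$ vanishes through a factor $E\eta_{it}=0$, while the diagonal blocks reduce, after using $E\eta_{it}^2=1$, to $-E[\bz_{1it}\bz_{1it}'/h_{it}]$ and $-E[\bz_{2it}\bz_{2it}'/(2h_{it}^2)]$, which together are exactly $-\Omega_0=-N^{-1}\sum_iE[\varGamma_{it}(\theta_0)\varGamma_{it}'(\theta_0)]$. A uniform law of large numbers over a neighborhood of $\theta_0$, combined with $\bar\theta\to_p\theta_0$, then gives convergence of the average Hessian at $\bar\theta$ to $-\Omega_0$. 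Positive-definiteness of $D$ (Assumption \ref{assumption:epsilon}) makes $\Omega_0$ invertible, so solving the expansion and applying Slutsky's theorem produces $\sqrt{NT}(\hat\theta-\theta_0)\to_d\mathcal N(0,\Omega_0^{-1}\Sigma_0\Omega_0^{-1})$.

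The hard part will be the moment control underpinning both the finiteness of $\Omega_0,\Sigma_0$ and the uniform convergence of the Hessian. The crucial structural fact is that $\bz_{1it}/\sqrt{h_{it}(\sigma)}$ and $\bz_{2it}/h_{it}(\sigma)$ are bounded uniformly over $\Theta$: each variable entering $\bz_{1it}$ linearly appears squared inside $h_{it}=\omega+\cdots$, and $\omega\ge\underline b_2>0$ controls the intercept direction. This is exactly where the order restriction $p_1\le p_2$, $q_1\le q_2$ is needed, so that the volatility component dominates the mean component. These bounds make $\Omega_0,\Sigma_0$ finite under only the fractional-moment hypothesis $E\|\by_t\|^u<\infty$ of Assumption \ref{assumption:y}, while $\kappa_4<\infty$ secures $E\|g_t\|^2<\infty$; constructing a single integrable dominating function for the second derivatives over the neighborhood, so as to legitimize the uniform law of large numbers, is the step demanding the most care.
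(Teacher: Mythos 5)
Your outline follows essentially the same route as the paper: consistency via a uniform ergodic law of large numbers over the compact $\Theta$ plus uniqueness of the maximizer of $EL_t(\theta)$ (the paper invokes Amemiya's Theorem 4.1.1), then a mean-value expansion of the score, a martingale-difference CLT with conditional variance $\sum_i\varGamma_{it}(\theta_0)D\varGamma_{it}'(\theta_0)$, and a ULLN for the Hessian; your key moment-control observation --- that $\bz_{1it}/\sqrt{h_{it}(\sigma)}$ and $\bz_{2it}/h_{it}(\sigma)$ are uniformly bounded because every regressor in the mean appears squared in $h_{it}$ and $\omega\ge\underline b_2>0$ --- is exactly the paper's inequalities \eqref{ieq:z1_norm2} and \eqref{ieq:z2_norm1}, and is what lets the result go through under only the fractional moment $E\|\by_t\|^u<\infty$.

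There is, however, one genuine gap: you assert that ``positive-definiteness of $D$ makes $\Omega_0$ invertible.'' This is not the right reason --- $D$ does not appear in $\Omega_0$ at all ($D$ controls only $\Sigma_0$ and degenerate cases such as $\eta_{it}^2$ being a.s. an affine function of $\eta_{it}$). The invertibility of $\Omega_0=N^{-1}\sum_iE[\varGamma_{it}(\theta_0)\varGamma_{it}'(\theta_0)]$ requires a separate non-degeneracy argument: that $\lambda_1'\bz_{1it}\asequal 0$ forces $\lambda_1=0$ and $\lambda_2'\bz_{2it}\asequal 0$ forces $\lambda_2=0$. This is not automatic, since the network-averaged lags $\bw_i'\by_{t-r}$ and the own lags $y_{i,t-r}$ could conceivably be a.s. linearly dependent. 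The paper proves it (claims \eqref{lemma4:M1}--\eqref{lemma4:M2} in Lemma \ref{lemma4}) by a lag-by-lag contradiction: a nonzero coefficient would express some $y_{s,t-1}$ as an a.s. linear combination of variables independent of $\eta_{s,t-1}$, forcing $E[\eta_{s,t-1}^2]=0$ and contradicting unit variance. Note that this same identifiability fact is also what underlies the ``limiting criterion uniquely maximized at $\theta_0$'' step of your consistency argument, which you currently assert without proof; so the missing argument is needed twice, and you should supply it explicitly.
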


When one apply this approximated distribution to make statistical inference, 
\( \Omega_0, \Sigma_0 \) should be replaced by 
\[
    \hat{\Omega} = \frac{1}{N}\sum_{i=1}^{N}E \left[ \varGamma_{it}(\hat{\theta}) \varGamma'_{it}(\hat{\theta})\right], \quad
    \hat{\Sigma} = \frac{1}{N}\sum_{i=1}^{N}E \left[ \varGamma_{it}(\hat{\theta}) \hat{D} \varGamma'_{it}(\hat{\theta}) \right],
\]
where 
\begin{equation*}
    \begin{aligned}
        &\varGamma_{it}(\hat\theta) = \Diag{\frac{\bz_{1it}}{\sqrt{h_{it}(\hat\sigma)}}, \frac{\bz_{2it}}{\sqrt{2}h_{it}(\hat\sigma)}} , \\
        &\hat{D} = \begin{pmatrix}
            1 & \displaystyle\frac{1}{\sqrt{2} N T} \sum_{i=1}^{N} \sum_{t=1}^{T} \frac{\epsilon_{it}^{3}(\hat{\mu})}{h_{it}^{3/2}(\hat{\sigma})} \\
            \displaystyle \frac{1}{\sqrt{2} N T} \sum_{i=1}^{N} \sum_{t=1}^{T} \frac{\epsilon_{it}^{3}(\hat{\mu})}{h_{it}^{3/2}(\hat{\sigma})} & \displaystyle\frac{1}{2 N T} \sum_{i=1}^{N} \sum_{t=1}^{T} \frac{\epsilon_{it}^{4}(\hat{\mu})}{h_{it}^{2}(\hat{\sigma})} - \frac{1}{2}
        \end{pmatrix}.
    \end{aligned}
\end{equation*}


\subsection{Model selection}

In this subsection, we discuss how to select the optimal order \( (p, q) \) for the NDAR model. 
Some notations need to be defined.
Let \( (p_0, q_0) \) represent the true order 
and the corresponding true parameters as 
\( \theta_0^{p_0, q_0} \).
When estimating the parameters using an order \( (p, q) \), 
the resulting estimator is denoted by \( \hat{\theta}^{p, q} \). 
When we overestimate the order \( (p, q) \), 
we are actually estimating \( \theta \) in a higher-dimentional parameter space. 
Let \( \iota_{p, q}(\theta^{p_0, q_0}) \) be defined as a \( 2p+2q+1 \) dimentional vector 
by placing zeros in extra positions of \( \theta^{p_0, q_0} \). 
The notation \( \pi_{p_0, q_0}(\theta^{p, q}) \) denotes the projection of 
\( \theta^{p, q} \) onto its non-extra positions. 
Formally, we can write:
\begin{equation*}
    \begin{aligned}
        \iota_{p, q}(\theta^{p_0, q_0}) & =
        (\alpha'_{(p_0)}, 0'_{(p - p_0)},
        \beta'_{(q_0)}, 0'_{(q - q_0)}, \omega,
        \phi'_{(p_0)}, 0'_{(p - p_0)},
        \psi'_{(q_0)}, 0'_{(q - q_0)})', \\
        \pi_{p_0, q_0}(\theta^{p, q})   & =
        (\alpha_{(p_0)}, \beta_{(q_0)}, \omega,
        \phi_{(p_0)}, \psi_{(q_0)})',
    \end{aligned}
\end{equation*}
where \( 0_{(k)} \) is a zero vector of length \( k \),
\( \alpha_{(k)} \) represents \( (\alpha_1, \ldots, \alpha_k)' \), 
and similarly, \( \beta_{(k)}, \phi_{(k)}, \) and \( \psi_{(k)} \) 
are defined in the same way.

To determine the appropriate order \( (p, q) \), 
we utilize the Bayesian Information Criterion (BIC), defined as:
\[
    \BIC(p, q) = -2F_T(\hat{\theta}^{(p, q)}) + (2p + 2q + 1) \ln T,
\]
where \( F_T(\theta) = \sum_{t=1}^{T} L_t(\theta) \). 
Let \( r_m \) be a fixed positive integer that do not lower than \( \max\{p_0,q_0\} \), 
and define \( \mathcal{C} = \{(p, q) \mid 0 \leq p, q \leq r_m\} \) 
as the possible combinations of \( p \) and \( q \). 
The optimal order \( (\hat{p}, \hat{q}) \) is then selected as:
\begin{equation}\label{eq:bic}
    (\hat{p}, \hat{q}) = \argmin{(p, q) \in \mathcal{C}} \BIC(p, q).
\end{equation}

To apply the Bayesian Information Criterion (BIC), 
we need to show that additional parameters do not bring 
higher likelihood values, in the \( O_p \) sense, 
when both \( p \geq p_0 \) and \( q \geq q_0 \). 
In this cases,
the estimator \( \hat{\theta}^{p, q} \) is a maximum 
found in a higher-dimensional parameter space instead of the true space,
where \( \iota_{p, q}(\theta^{p_0, q_0}) \) is the true parameter. 
Therefore, due to some elements of \( \hat{\theta}^{p, q} \) 
lying on the boundary, 
\( \hat{\theta}^{p, q}-\iota_{p, q}(\theta_0^{p_0, q_0}) \) should not be approximated 
by a normal distribution.

Let \( \Theta^{p,q} \) be the subset of \( \R^{2p+2q+1} \)
and \( \pi_{p_0,q_0}(\Theta^{p,q})=\{\pi_{p_0,q_0}(\theta^{p,q})|\theta^{p,q}\in\Theta^{p,q}\} \)
\begin{assumption}\label{assumption:space1}
    \( \Theta^{r_m,r_m} \) is compact
    with \( 0 < \underline{\sigma}<\omega< \bar{\sigma} \) 
    and \( 0 \leq \phi_{i}, \psi_{j} < \bar{\sigma} \)
    for all \( i=1,\ldots,p \) and \( j=1,\ldots,q \);
    \( \theta_{0}^{p_0,q_0} \) is an intieror of \( \pi_{p_0,q_0}(\Theta^{r_m,r_m}) \)
    with \( \underline{\sigma} < \phi_{i}, \psi_{j} < \bar{\sigma} \)
    for all \( i=1,\ldots,p_0 \) and \( j=1,\ldots,q_0 \),
\end{assumption}
Namely, in the highest-dimentional parameter space \( \Theta^{r_m,r_m} \),
volatility coeffients (except for \( \omega \)) are allowed equal zero.
In correct dimentional parameter space \( \pi_{p_0,q_0}(\Theta^{r_m,r_m}) \), however,
the condition is identical with assumption \ref{assumption:space}.

\begin{lemma}\label{lemma:higher_space_estimator}
    Assume that \( E\left\lVert \by_t \right\rVert^6<0 \),
    The conditions of theorem \ref{theorem:normal} hold 
    except for the assumption \ref{assumption:space} is replaced by assumption \ref{assumption:space1}, 
    \[
        \hat{\theta}^{p,q} = \argmax{\theta^{p,q}\in\pi_{p,q}(\Theta^{r_m,r_m})} \sum_{t=1}^{T} L_t(\theta^{p,q})
    \]
    then \( \sum_{t=1}^{T} L_t(\hat{\theta}^{p,q}) - \sum_{t=1}^{T} L_t(\iota_{p,q}(\theta_{0}^{p_0,q_0}))  = O_p(1)  \).
\end{lemma}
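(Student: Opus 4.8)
The plan is to exploit that $\hat\theta^{p,q}$ maximizes $F_T$ over a feasible set that contains the embedded true value $\theta^\ast := \iota_{p,q}(\theta_0^{p_0,q_0})$, so that the target difference is automatically nonnegative, and then to bound it from above by $O_p(1)$ through a second-order expansion of $F_T$ about $\theta^\ast$. The delicate point is that $\theta^\ast$ sits on the boundary of the feasible region (the surplus coefficients $\phi_r,\psi_r$ vanish while being constrained to be nonnegative by Assumption \ref{assumption:space1}). Nonetheless $F_T$ is itself a smooth function of $\theta$ on the open set where every $h_{it}(\sigma)>0$, and this set contains the whole segment joining $\theta^\ast$ to $\hat\theta^{p,q}$, because $\omega>\underline{\sigma}>0$ together with $\phi_r,\psi_r\ge 0$ keeps $h_{it}\ge\underline{\sigma}$ along that (convex, box-constrained) segment. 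Hence the expansion is legitimate even though we sit at a boundary point.

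First I would establish consistency $\hat\theta^{p,q}\to_p\theta^\ast$. This follows from a uniform law of large numbers for $(NT)^{-1}F_T$ over the compact set $\pi_{p,q}(\Theta^{r_m,r_m})$ --- exactly the ingredient already behind Theorem \ref{theorem:normal}, whose hypotheses are inherited here with Assumption \ref{assumption:space} relaxed to Assumption \ref{assumption:space1} --- together with identifiability of the over-specified model: setting the surplus lags to zero reproduces the true conditional mean and variance, and any nonzero perturbation strictly lowers the limiting Gaussian quasi-likelihood because the augmented regressors $\bz_{1it},\bz_{2it}$ are not collinear. Consistency to a boundary point is unproblematic here.

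Next I would write, by a mean-value expansion along the segment,
\[
F_T(\hat\theta^{p,q}) - F_T(\theta^\ast) = v'S_T - \tfrac12\, v'\hat H_T\, v,
\]
where $v = \sqrt{NT}\,(\hat\theta^{p,q}-\theta^\ast)$, $S_T=(NT)^{-1/2}\nabla F_T(\theta^\ast)$, and $\hat H_T = -(NT)^{-1}\nabla^2 F_T(\bar\theta)$ for some $\bar\theta$ on the segment. I would then treat the two pieces separately. For the score, $S_T=O_p(1)$: evaluated at the true parameter it is a martingale-type sum whose asymptotic normality and finite limiting variance follow from the same arguments as in Theorem \ref{theorem:normal}, with the sixth-moment bound $E\lVert\by_t\rVert^6<\infty$ supplying the fourth-moment control needed for the entries built from $\bz_{2it}\sim\bx_t$, the denominators $h_{it}\ge\underline{\sigma}$ keeping each term integrable, and positive definiteness of $D$ in Assumption \ref{assumption:epsilon} governing the limiting covariance; the model's built-in dominance of the volatility block over the mean block lets the $\bz_{1it}/\sqrt{h_{it}}$ contributions be absorbed. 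For the Hessian, consistency forces $\bar\theta\to_p\theta^\ast$, and a uniform law of large numbers gives $\hat H_T\to_p\Omega_0^{(p,q)}$, the information matrix of the enlarged model, which is positive definite because the augmented regressors remain non-collinear.

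Finally I would conclude by completing the square. On the event $\{\hat H_T\succ 0\}$, whose probability tends to one, for every $v$
\[
v'S_T - \tfrac12 v'\hat H_T v = \tfrac12 S_T'\hat H_T^{-1}S_T - \tfrac12 (v-\hat H_T^{-1}S_T)'\hat H_T (v-\hat H_T^{-1}S_T) \le \tfrac12 S_T'\hat H_T^{-1}S_T,
\]
so that $0\le F_T(\hat\theta^{p,q})-F_T(\theta^\ast)\le \tfrac12 S_T'\hat H_T^{-1}S_T = O_p(1)$, the lower bound coming from feasibility of $\theta^\ast$ and the upper bound from $S_T=O_p(1)$ and $\hat H_T^{-1}=O_p(1)$. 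The main obstacle I expect is not the algebra but the boundary geometry: one must argue that the unconstrained quadratic maximizer dominates the constrained one, so that no Karush--Kuhn--Tucker or one-sided correction is needed for the $O_p(1)$ upper bound, and separately secure the positive definiteness of $\Omega_0^{(p,q)}$ and the sixth-moment-based integrability of the score and Hessian uniformly in a neighborhood of the boundary point $\theta^\ast$.
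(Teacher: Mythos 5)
Your proposal is correct and follows essentially the same route as the paper: a second-order expansion of $F_T$ about the embedded true parameter $\iota_{p,q}(\theta_0^{p_0,q_0})$, a completion of the square in the normalized score and Hessian, the maximizing property of $\hat\theta^{p,q}$ combined with feasibility of the embedded truth, and the sixth-moment condition standing in for the positive lower bound on the volatility coefficients so that the score and Hessian (whose extra $\bz_{2it}$ entries are no longer dominated by $h_{it}(\sigma_0)$) remain integrable. The only organizational difference is that the paper controls an explicit Taylor remainder by first deriving $\lVert\sqrt{T}(\hat\theta-\theta_0)\rVert=O_p(1)$ via the triangle inequality, whereas you obtain the $O_p(1)$ upper bound directly from the completed square with a mean-value Hessian; the substance is the same.
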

\begin{remark}
    In practice, the algorithm for searching \( \hat{\theta} \) in \eqref{eq:qmle} 
    must allow the lower bounds of \( \phi_i \) and \( \psi_j \) to be zero, 
    though Assumption \ref{assumption:space} needs a sufficiently low bound \( \underline{b}_2 \).
    In real data, the elements of estimator usually be a small number 
    but not exactly equal zero.
\end{remark}

\begin{theorem}\label{theorem:selector}
    Under the Assumptions of Lemma \ref{lemma:higher_space_estimator}, 
    \( P\{(\hat{p},\hat{q})=(p_0,q_0)\}\to 1 \) as \( T\to\infty \).
\end{theorem}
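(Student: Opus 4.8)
The plan is to show that $\BIC(p_0,q_0)$ is strictly smaller than $\BIC(p,q)$ for every other $(p,q)\in\mathcal{C}$ with probability tending to one; since $\mathcal{C}$ is a fixed finite set, a union bound then yields $P\{(\hat p,\hat q)=(p_0,q_0)\}\to1$. Writing out the criterion difference,
\[
\BIC(p,q)-\BIC(p_0,q_0) = -2\bigl[F_T(\hat{\theta}^{(p,q)})-F_T(\hat{\theta}^{(p_0,q_0)})\bigr] + 2\bigl[(p+q)-(p_0+q_0)\bigr]\ln T,
\]
I would split $\mathcal{C}\setminus\{(p_0,q_0)\}$ into the \emph{overfitting} region, where $p\ge p_0$ and $q\ge q_0$ but $(p,q)\neq(p_0,q_0)$, and the \emph{underfitting} region, where $p<p_0$ or $q<q_0$. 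The two regions call for different dominance arguments: in the overfitting case the penalty must beat a bounded likelihood gain, whereas in the underfitting case a likelihood loss of order $T$ must beat the penalty.

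For the overfitting region I would lean on Lemma \ref{lemma:higher_space_estimator}. Observe first that embedding the true parameter with zeros in the extra slots produces numerically identical fitted means $\varepsilon_{it}$ and variances $h_{it}$ at every $t$, so $F_T(\iota_{p,q}(\theta_0^{p_0,q_0}))=F_T(\theta_0^{p_0,q_0})$. The lemma gives $F_T(\hat{\theta}^{(p,q)})-F_T(\iota_{p,q}(\theta_0^{p_0,q_0}))=O_p(1)$, while the $\sqrt{NT}$-consistency and the quadratic expansion underlying Theorem \ref{theorem:normal} give $0\le F_T(\hat{\theta}^{(p_0,q_0)})-F_T(\theta_0^{p_0,q_0})=O_p(1)$. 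Adding these, $F_T(\hat{\theta}^{(p,q)})-F_T(\hat{\theta}^{(p_0,q_0)})=O_p(1)$. Since $(p+q)-(p_0+q_0)\ge1$ here, the penalty term is at least $2\ln T\to\infty$ and swamps the $O_p(1)$ likelihood term, so the $\BIC$ difference diverges to $+\infty$ in probability.

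For the underfitting region the argument rests on identifiability. By Assumption \ref{assumption:y} and the ergodic theorem, $T^{-1}F_T(\theta)$ converges for each fixed $\theta$ to a limiting quasi-expected log-likelihood $\ell(\theta)$, and the true parameter $\theta_0^{p_0,q_0}$ is its unique maximizer over the full space. When $(p,q)$ is underfitted the constrained maximizer cannot reproduce the true conditional mean/variance dynamics, so the constrained maximum of $\ell$ is at most $\ell(\theta_0^{p_0,q_0})-c$ for some $c>0$; hence $T^{-1}[F_T(\hat{\theta}^{(p,q)})-F_T(\hat{\theta}^{(p_0,q_0)})]\to -c<0$, i.e. the likelihood gap is $-cT(1+o_p(1))$. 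The term $-2[\cdots]=2cT(1+o_p(1))$ then dominates the $O(\ln T)$ penalty and the $\BIC$ difference again tends to $+\infty$.

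The hard part will be the underfitting identifiability step: establishing the strict gap $c>0$. This requires arguing that dropping a mean lag or a volatility lag genuinely degrades the limiting objective, despite the NDAR structure coupling the mean and variance components through the same order and the network operator $W$. Concretely, I would show that $\ell(\theta)=\ell(\theta_0^{p_0,q_0})$ forces almost-sure equality of both the fitted-mean regressors $\bz_{1it}'\mu$ and the fitted-variance regressors $\bz_{2it}'\sigma$, and that this is impossible under a smaller order because the relevant lagged network regressors are not almost-surely linearly dependent on the lower-order ones—a non-degeneracy property of the stationary NDAR process that would need to be verified. Once that strict separation is in hand, combining the two regions over the finite set $\mathcal{C}$ completes the proof.
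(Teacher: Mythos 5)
Your proposal is correct and takes essentially the same route as the paper, whose own proof is just a two-line pointer: it combines Lemma \ref{lemma:higher_space_estimator} (your overfitting case, where the $O_p(1)$ likelihood gain is swamped by the $\ln T$ penalty) with ``a similar argument to the proof of Theorem 3 in \cite{lin2024vector}'', i.e.\ exactly the overfitting/underfitting dichotomy you describe. The strict identifiability gap you flag as the hard step is already supplied by the paper's machinery --- Lemma \ref{lemma4} establishes that $EL_t(\theta)$ has $\theta_0$ as its unique maximizer via the almost-sure linear independence of the regressors $\bz_{1it}$ and $\bz_{2it}$, which together with compactness of $\Theta^{r_m,r_m}$ and the uniform convergence in Lemma \ref{lemma3}(i) yields the constant $c>0$ --- so no genuinely new argument is required there.
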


\section{Simulation studies}

\subsection{Model Estimation}
To demonstrate the performance of the proposed estimator with a finite sample size, 
we consider the following setup.  
we assume \( \eta_{it} \sim \mathcal{N}(0, 1) \) or \( \eta_{it} \sim \sqrt{3/5}\ t_{5} \), 
where \( t_{5} \) denotes the Student's \( t \)-distribution with 5 degrees of freedom, and the constant 
\( \sqrt{3/5} \) is chosen to ensure \( \mathbb{E}[\eta_{it}^2] = 1 \).
To ensure that the simulated data \( \{y_{it}\} \) approaches a near-stationary state, 
it is necessary to generate data over a longer time span \( T_0 + T \), 
retaining only \( \{y_{it}, t > T_0\} \). In this simulation, we set \( T_0 = 500 \) 
to ensure stationarity, as a larger \( T_0 \) does not significantly change the results. 
Since the network matrix \( A \) in model \eqref{eq:NDAR_pq} is predefined, 
all simulated network matrices are generated once and then fixed.
There are three distinct scenarios, 
each demonstrating a unique structural property of the network.

\begin{example}\label{ex:1}
    (Network with Randomly Distributed Structure)
    In this example, each node's out-degree is determined by a discrete uniform
    distribution, and target nodes are selected using simple random sampling.
    Specifically, the out-degree \( u_i \) of node \( i \) is sampled 
    from the set \( \{1, 2, 3, 4, 5\} \) with equal probability. 
    The set of target nodes
    \( \mathcal{S}_i \) is constructed by randomly selecting \( u_i \) nodes from
    \( \{1, \dots, N\} \setminus \{i\} \) without replacement. 
    Then the network matrix \( A \) is defined such that \( a_{ij} = 1 \) if \( j \in \mathcal{S}_i \),
    and \( a_{ij} = 0 \) otherwise. 
    This network structure is illustrated in Figure \ref{fig:case1_hist_graph}.
\end{example}
\begin{figure}[htbp]
    \centering
    \begin{subfigure}{0.45\textwidth}
        \includegraphics[width=\textwidth]{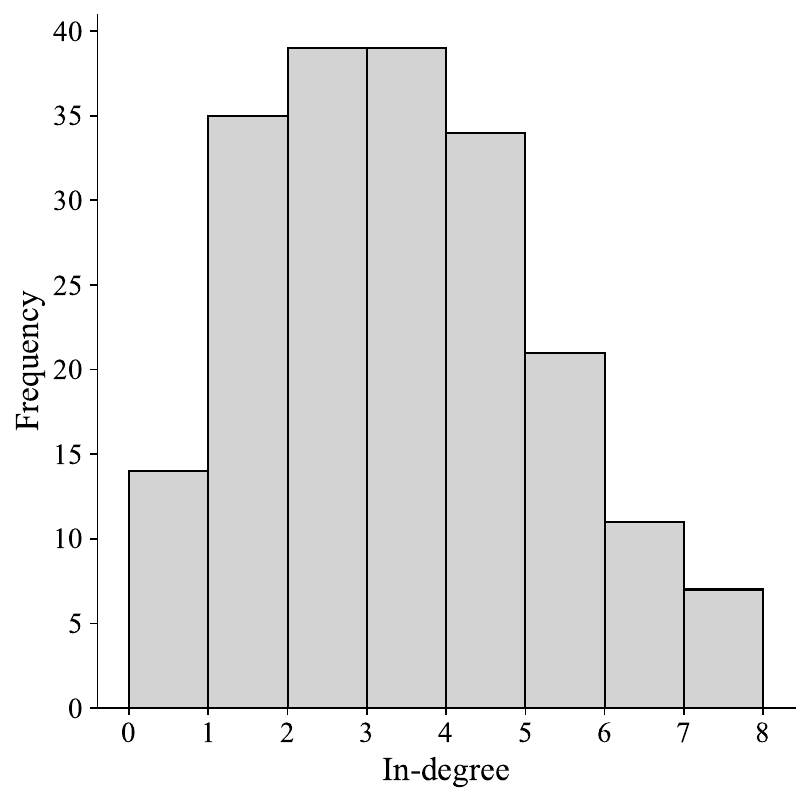}
        \caption{}
    \end{subfigure}
    \begin{subfigure}{0.45\textwidth}
        \includegraphics[width=\textwidth]{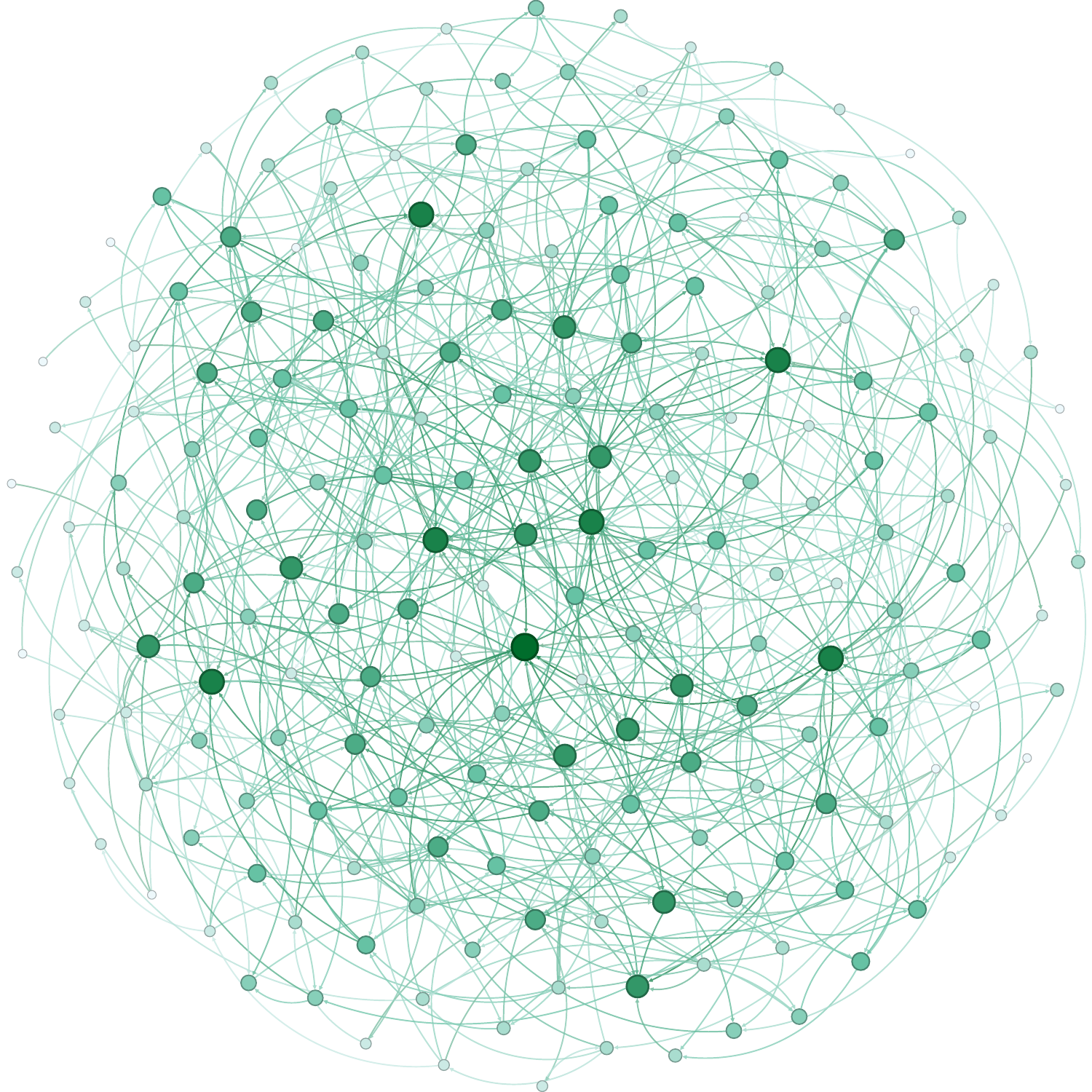}
        \caption{}
    \end{subfigure}
    \caption{Simulated Network for Example \ref{ex:1} with \( N = 200 \) nodes. 
    (a) In-degree distribution, illustrating that each node exhibits relatively uniform influence. 
    (b) Network graph, where deeper colors of nodes or edges represent higher in-degree values.}      
    \label{fig:case1_hist_graph}
\end{figure}

\begin{example}\label{ex:2}
    (Network with Power-Law Degree Distribution)
    Similar to the first example, the out-degree \( u_i \) of each node \( i \)
    is sampled from the set \( \{1, 2, 3, 4, 5\} \) with equal probability. 
    To simulate the presence of a few highly influential nodes, however, 
    the selection of target nodes are binded by a power-law distribution. 
    Specifically, for each node \( i \), we generate a random variables,
    \( v_i \sim P(x) = c x^{-\gamma}, x \in \mathbb{Z}_+ \), where \( c \) is a
    normalization constant, \( \mathbb{Z}_+ \) represents the set of positive
    integers and \( \gamma = 2.5 \) is seted. 
    The target node set \( \mathcal{S}_i \) is constructed by randomly
    selecting \( u_i \) nodes from \( \{1, \dots, N\} \setminus \{i\} \) with
    probabilities proportional to \( v_i \). The network matrix is defined such
    that \( a_{ij} = 1 \) if \( j \in \mathcal{S}_i \), and \( a_{ij} = 0 \)
    otherwise. This network structure is depicted in Figure \ref{fig:case2_hist_graph}.
\end{example}
\begin{figure}[htbp]
    \centering
    \begin{subfigure}{0.45\textwidth}
        \includegraphics[width=\textwidth]{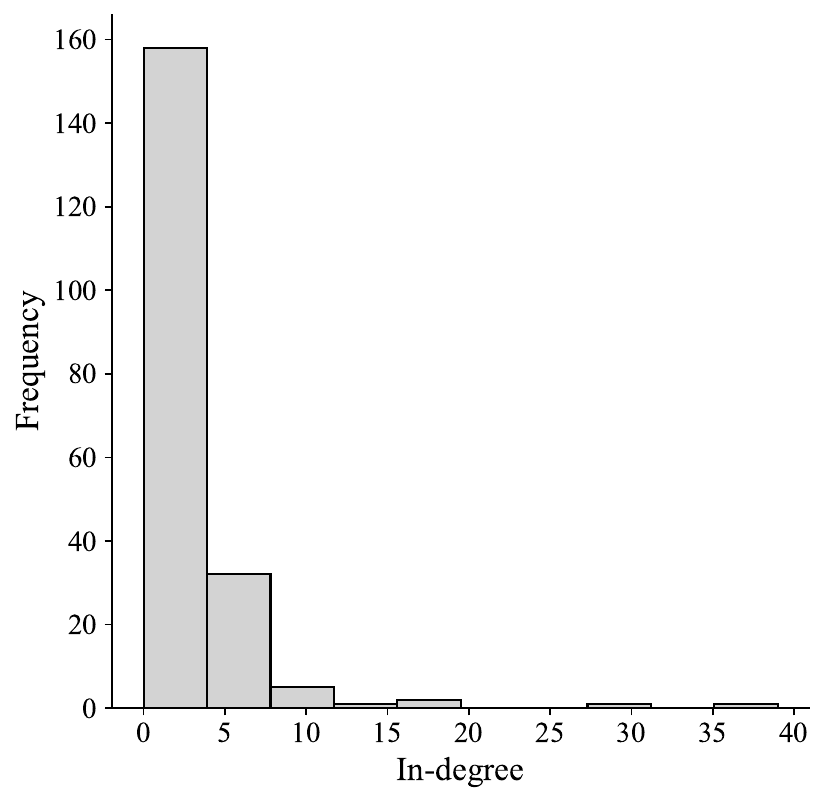}
        \caption{}
    \end{subfigure}
    \begin{subfigure}{0.45\textwidth}
        \includegraphics[width=\textwidth]{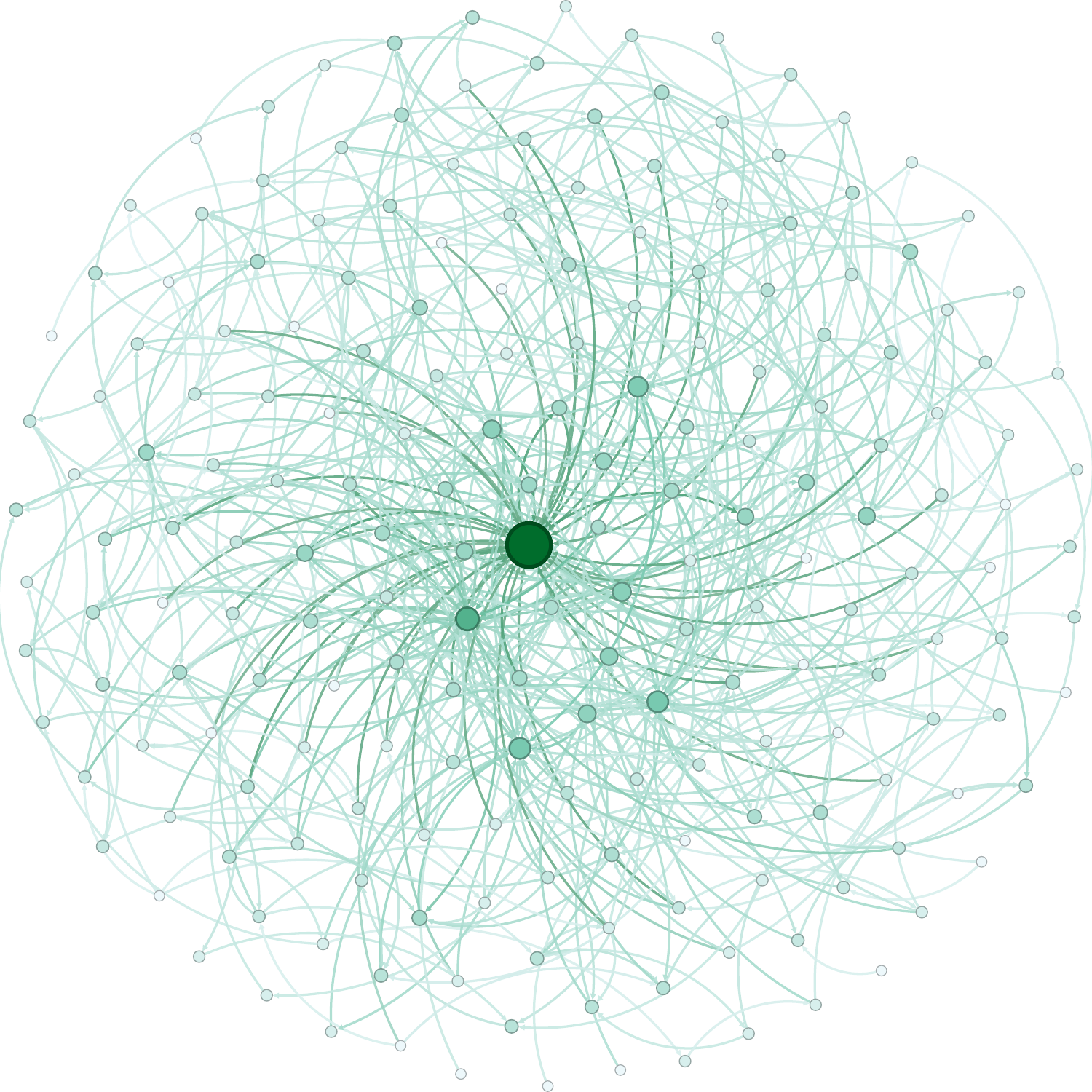}
        \caption{}
    \end{subfigure}
    \caption{Simulated Network for Example \ref{ex:2} with \( N = 200 \) nodes. 
    (a) In-degree distribution, showing that most nodes have minimal influence, while a few nodes influence more than 20 other nodes. 
    (b) Network graph, where deeper colors of nodes or edges represent higher in-degree values, emphasizing the presence of super nodes.}    
    \label{fig:case2_hist_graph}
\end{figure}

\begin{example}\label{ex:3}
    (Stochastic Block Network Structure)
    In this example, nodes are divided into \( K = N/10 \) blocks, and connections
    are more likely to occur between nodes within the same block than between nodes
    in different blocks. Specifically, the probability of an edge existing between
    nodes \( i \) and \( j \) is given by \( P\{a_{ij} = 1\} = p_1 \) if nodes \( i \)
    and \( j \) belong to the same block, and \( P\{a_{ij} = 1\} = p_2 \) if they
    belong to different blocks, where \( p_1 > p_2 \). To ensure consistency in
    network density compared to other structures, the parameters are set as
    \( p_1 = 28/N \) and \( p_2 = 0.01/N \). This network structure is visualized
    in Figure \ref{fig:case3_hist_graph}.
\end{example}
\begin{figure}[htbp]
    \centering
    \begin{subfigure}{0.45\textwidth}
        \includegraphics[width=\textwidth]{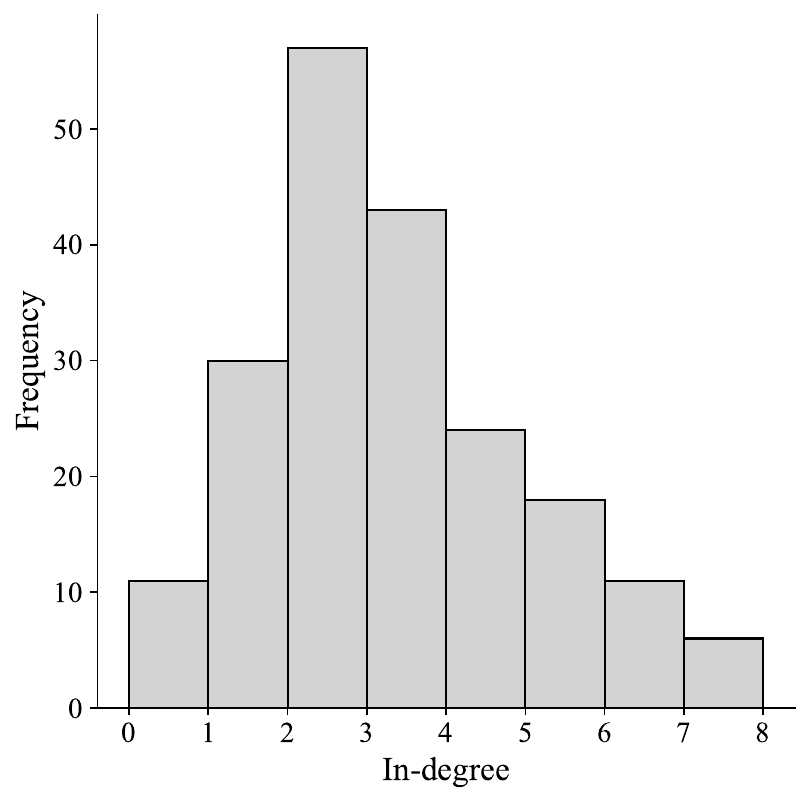}
        \caption{}
    \end{subfigure}
    \begin{subfigure}{0.45\textwidth}
        \includegraphics[width=\textwidth]{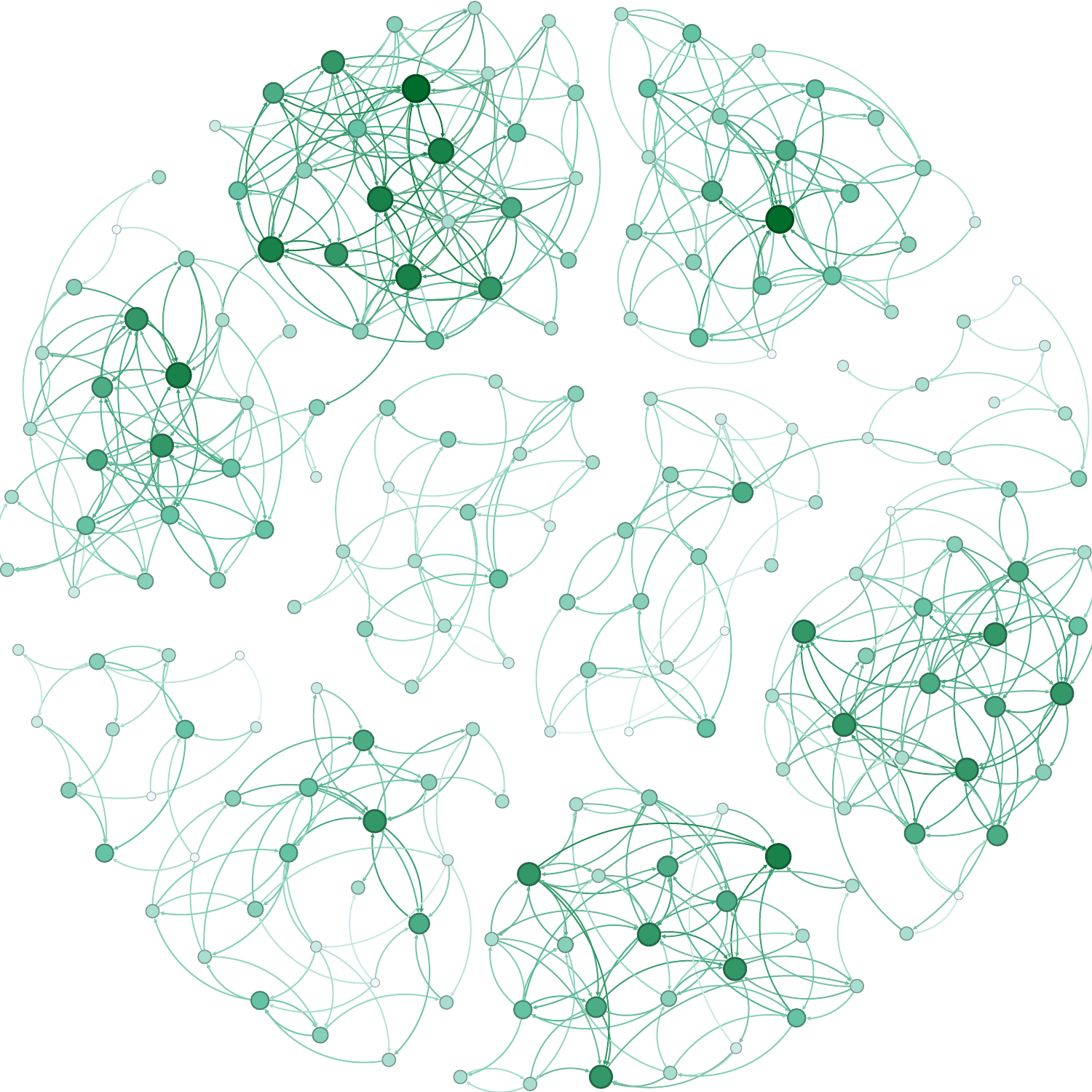}
        \caption{}
    \end{subfigure}
    \caption{Simulated Network for Example \ref{ex:3} with \( N = 200 \) nodes. 
    (a) In-degree distribution, illustrating that most nodes influence 2-4 other nodes. 
    (b) Network graph, revealing approximately 10 community structures, 
    where deeper colors of nodes or edges represent higher in-degree values.}
    \label{fig:case3_hist_graph}
\end{figure}

We generate \( 1000 \) replications of nodes size \( N \in \{50, 100, 150\} \) and time span \( T \in \{100, 200, 400\} \).
The data are generated from the model \eqref{eq:NDAR_pq} with the following data generating procedures (DGPs):
\begin{align*}
    &\text{DGP 1:} & \quad (p_0, q_0) = (1, 1)\text{ with }\theta_0 &= (\alpha_{10}, \beta_{10}, \omega_0, \phi_{10}, \psi_{10})' \\
    &              &                                               &= (0.05, -0.1, 0.05, 0.05, 0.1)', \\
    &\text{DGP 2:} & \quad (p_0, q_0) = (1, 2)\text{ with }\theta_0 &= (\alpha_{10}, \beta_{10}, \beta_{20}, \omega_0, \phi_{10}, \psi_{10}, \psi_{20})' \\
    &              &                                               &= (0.05, -0.05, 0.1, 0.1, 0.05, 0.1, 0.1)',
\end{align*}
where DGP 1 is use for showing the performance of the QMLE under the true order,
while DGP 2 is used to evaluate the BIC in selecting the true order.
Republications \( R = 1000 \) are set for each scenario.

\subsection{Performance measurements}
In this subsection, we introduce the performance measurements used to evaluate the QMLE and BIC, respectively.
Then we present the simulation results for the two DGPs in Example \ref{ex:1}, \ref{ex:2} and \ref{ex:3}.

To evaluate the performance of the QMLE in DGP 1, several measurements are used to assess 
estimation accuracy. Let \( \hat{\theta}^{(r)} \) denote the \( r \)-th replication estimator 
of the true parameter vector \( \theta_{0} \), 
and let the subscript \( k \) of a vector \( \theta \) denote 
its \( k \)-th element.
Then the bias of the estimators is defined as
\( \text{Bias}_{k} = R^{-1} \sum_{r=1}^{R} (\hat{\theta}_{k}^{(r)} - \theta_{k0}) \).
The average of asymptotic standard deviation is defined as 
\( \text{ASD}_{k} = R^{-1} \sum_{r=1}^{R} \widehat{\text{SD}}_k \), 
where \( \widehat{\text{SD}}_k \) is the \( k \)-th diagonal element of
estimated asymptotic covariance matrix \( \hat\Omega^{-1}\hat\Sigma\hat\Omega^{-1} \).
The empirical standard deviation is defined as 
\( \text{ESD}_{k} = \{R^{-1} \sum_{r=1}^{R} (\hat{\theta}_k^{(r)} - \bar{\hat{\theta}}_k)^2\}^{1/2} \), 
where \( \bar{\hat{\theta}}_k = R^{-1} \sum_{r=1}^{R} \hat{\theta}_k^{(r)} \). 
The coverage probability of the 95\% empirical coverage probability is defined as 
\( \text{CP}_k = R^{-1} \sum_{r=1}^{R} I\{ \theta_{k0} \in \text{CI}_k^{(r)} \} \), 
where \( \text{CI}_k^{(r)} = [\hat{\theta}_k^{(r)} - z_{0.975} \widehat{\text{SD}}_k, \hat{\theta}_k^{(r)} + z_{0.975} \widehat{\text{SD}}_k] \),
\( z_{0.975} \) is the 97.5\% quantile of the standard normal distribution and
\( I\{\cdot\} \) is the indicator function.
Table \ref{tab:sim_bias_DGP1_case1} shows that as \( T \) increases, 
the Bias approaches 0, and both ASD and ESD decrease and becoming closer to each other. 
The CP consistently remains close to the theoretical value of 0.95. 
For the results corresponding to the heavy-tailed \( t_t \) distribution, both the Bias and ASD are larger compared to those of the standard normal distribution.

\begin{table}[h]
    \centering
    \caption{Simulation results for DGP 1 in Example \ref{ex:1}.
        The table reports the Biases(\(\x10^3\)), ASDs(\(\x10^2\)), ESDs(\(\x10^2\)),
        and CPs of the QMLE, where \( \eta_{it} \) follow the standard normal distribution
        or the scaled \( t_5 \) distribution.
        The network densities corresponding to each \( N \) are provided in parentheses below.}
    \begin{tabular*}{\textwidth}{@{\extracolsep{\fill}}cccrcccrccc}
        \toprule
        &     &              & \multicolumn{4}{c}{Normal} & \multicolumn{4}{c}{\( t_5 \)}                                           \\
        \cmidrule(lr){4-7} \cmidrule(lr){8-11}
        \( N \)      & \( T \)   &    QMLE      & Bias                       & ASD                           & ESD  & CP   & Bias & ASD  & ESD  & CP   \\
        \midrule
        50 & 100 & \(\alpha_1\) & 0.03 & 2.02 & 2.05 & 0.95 & -1.48 & 2.08 & 2.03 & 0.96 \\
        (5.80\%) &  & \(\beta_1\) & -0.21 & 1.52 & 1.50 & 0.95 & 0.77 & 1.60 & 1.63 & 0.94 \\
        &  & \(\omega\) & 0.07 & 0.18 & 0.18 & 0.95 & 0.09 & 0.27 & 0.29 & 0.94 \\
        &  & \(\phi_1\) & -0.51 & 2.12 & 2.12 & 0.94 & 1.40 & 2.97 & 4.14 & 0.92 \\
        &  & \(\psi_1\) & -0.78 & 1.81 & 1.81 & 0.94 & -1.13 & 3.02 & 3.37 & 0.94 \\
        & 200 & \(\alpha_1\) & 0.45 & 1.43 & 1.40 & 0.96 & 0.60 & 1.47 & 1.45 & 0.95 \\
        &  & \(\beta_1\) & 0.05 & 1.07 & 1.06 & 0.95 & 0.57 & 1.13 & 1.09 & 0.96 \\
        &  & \(\omega\) & 0.05 & 0.13 & 0.13 & 0.94 & -0.00 & 0.20 & 0.21 & 0.95 \\
        &  & \(\phi_1\) & -0.05 & 1.50 & 1.51 & 0.94 & -0.36 & 2.11 & 2.08 & 0.95 \\
        &  & \(\psi_1\) & -1.17 & 1.28 & 1.30 & 0.94 & 0.81 & 2.19 & 2.22 & 0.94 \\
        & 400 & \(\alpha_1\) & 0.17 & 1.01 & 1.04 & 0.93 & -0.37 & 1.04 & 1.02 & 0.95 \\
        &  & \(\beta_1\) & -0.33 & 0.76 & 0.75 & 0.95 & -0.57 & 0.80 & 0.79 & 0.95 \\
        &  & \(\omega\) & 0.02 & 0.09 & 0.09 & 0.94 & -0.08 & 0.14 & 0.14 & 0.94 \\
        &  & \(\phi_1\) & -0.26 & 1.06 & 1.06 & 0.94 & 0.10 & 1.52 & 1.58 & 0.96 \\
        &  & \(\psi_1\) & 0.13 & 0.91 & 0.91 & 0.94 & 0.44 & 1.55 & 1.56 & 0.95 \\
        200 & 100 & \(\alpha_1\) & 0.41 & 1.03 & 1.03 & 0.95 & 0.21 & 1.06 & 1.02 & 0.97 \\
        (1.48\%) &  & \(\beta_1\) & 0.40 & 0.76 & 0.77 & 0.95 & 0.06 & 0.80 & 0.84 & 0.94 \\
        &  & \(\omega\) & 0.05 & 0.09 & 0.09 & 0.95 & 0.06 & 0.14 & 0.15 & 0.96 \\
        &  & \(\phi_1\) & -0.41 & 1.08 & 1.07 & 0.96 & -0.05 & 1.56 & 1.49 & 0.96 \\
        &  & \(\psi_1\) & -0.49 & 0.91 & 0.94 & 0.94 & 0.07 & 1.57 & 1.64 & 0.94 \\
        & 200 & \(\alpha_1\) & -0.43 & 0.73 & 0.71 & 0.96 & -0.01 & 0.75 & 0.72 & 0.96 \\
        &  & \(\beta_1\) & -0.04 & 0.54 & 0.56 & 0.94 & -0.07 & 0.57 & 0.56 & 0.96 \\
        &  & \(\omega\) & 0.00 & 0.06 & 0.06 & 0.95 & 0.02 & 0.10 & 0.11 & 0.95 \\
        &  & \(\phi_1\) & -0.30 & 0.77 & 0.76 & 0.96 & -0.69 & 1.11 & 1.11 & 0.95 \\
        &  & \(\psi_1\) & -0.01 & 0.64 & 0.64 & 0.96 & -0.02 & 1.12 & 1.19 & 0.95 \\
        & 400 & \(\alpha_1\) & 0.05 & 0.52 & 0.51 & 0.95 & 0.30 & 0.53 & 0.54 & 0.94 \\
        &  & \(\beta_1\) & 0.14 & 0.38 & 0.36 & 0.96 & -0.13 & 0.40 & 0.41 & 0.95 \\
        &  & \(\omega\) & 0.02 & 0.05 & 0.05 & 0.95 & 0.05 & 0.07 & 0.08 & 0.95 \\
        &  & \(\phi_1\) & 0.01 & 0.54 & 0.55 & 0.95 & -0.05 & 0.81 & 0.79 & 0.96 \\
        &  & \(\psi_1\) & -0.19 & 0.45 & 0.47 & 0.94 & -0.07 & 0.81 & 0.86 & 0.95 \\
        \bottomrule
    \end{tabular*}
    \label{tab:sim_bias_DGP1_case1}
\end{table}

\begin{table}[h]
    \centering
    \caption{Simulation results for DGP 1 in Example \ref{ex:2}.
        The table reports the Biases(\(\x10^3\)), ASDs(\(\x10^2\)), ESDs(\(\x10^2\)),
        and CPs of the QMLE, where \( \eta_{it} \) follow the standard normal distribution
        or the scaled \( t_5 \) distribution.
        The network densities corresponding to each \( N \) are provided in parentheses below.}
    \begin{tabular*}{\textwidth}{@{\extracolsep{\fill}}cccrcccrccc}
        \toprule
        &     &              & \multicolumn{4}{c}{Normal} & \multicolumn{4}{c}{\( t_5 \)}                                           \\
        \cmidrule(lr){4-7} \cmidrule(lr){8-11}
        \( N \)      & \( T \)   &    QMLE      & Bias                       & ASD                           & ESD  & CP   & Bias & ASD  & ESD  & CP   \\
        \midrule
        50 & 100 & \(\alpha_1\) & -0.38 & 2.11 & 2.12 & 0.95 & -0.95 & 2.16 & 2.19 & 0.95 \\
        (5.67\%)&  & \(\beta_1\) & -0.53 & 1.51 & 1.51 & 0.95 & -0.67 & 1.60 & 1.57 & 0.96 \\
        &  & \(\omega\) & 0.15 & 0.18 & 0.19 & 0.94 & 0.06 & 0.28 & 0.29 & 0.94 \\
        &  & \(\phi_1\) & -0.83 & 2.20 & 2.24 & 0.94 & -0.81 & 2.99 & 2.98 & 0.94 \\
        &  & \(\psi_1\) & -2.20 & 1.80 & 1.92 & 0.92 & 0.07 & 3.01 & 3.21 & 0.94 \\
        & 200 & \(\alpha_1\) & -0.90 & 1.49 & 1.56 & 0.94 & -0.22 & 1.53 & 1.55 & 0.95 \\
        &  & \(\beta_1\) & -0.36 & 1.07 & 1.11 & 0.94 & -0.08 & 1.13 & 1.13 & 0.95 \\
        &  & \(\omega\) & 0.06 & 0.13 & 0.13 & 0.94 & -0.02 & 0.20 & 0.20 & 0.94 \\
        &  & \(\phi_1\) & -0.97 & 1.55 & 1.54 & 0.95 & -0.09 & 2.17 & 2.53 & 0.94 \\
        &  & \(\psi_1\) & -0.80 & 1.28 & 1.28 & 0.94 & 0.54 & 2.19 & 2.52 & 0.94 \\
        & 400 & \(\alpha_1\) & 0.31 & 1.06 & 1.05 & 0.95 & 0.24 & 1.09 & 1.08 & 0.95 \\
        &  & \(\beta_1\) & 0.01 & 0.76 & 0.74 & 0.96 & 0.05 & 0.80 & 0.79 & 0.94 \\
        &  & \(\omega\) & 0.08 & 0.09 & 0.09 & 0.95 & 0.01 & 0.14 & 0.15 & 0.95 \\
        &  & \(\phi_1\) & -0.70 & 1.10 & 1.12 & 0.94 & 0.31 & 1.56 & 1.65 & 0.95 \\
        &  & \(\psi_1\) & -0.28 & 0.91 & 0.90 & 0.94 & 0.04 & 1.56 & 1.63 & 0.95 \\
        200 & 100 & \(\alpha_1\) & 0.07 & 1.05 & 1.08 & 0.94 & -0.09 & 1.08 & 1.07 & 0.95 \\
        (1.44\%)&  & \(\beta_1\) & 0.49 & 0.76 & 0.76 & 0.94 & 0.23 & 0.80 & 0.80 & 0.95 \\
        &  & \(\omega\) & 0.04 & 0.09 & 0.09 & 0.94 & 0.03 & 0.14 & 0.15 & 0.96 \\
        &  & \(\phi_1\) & -0.47 & 1.09 & 1.13 & 0.95 & -1.10 & 1.56 & 1.50 & 0.94 \\
        &  & \(\psi_1\) & -0.47 & 0.91 & 0.90 & 0.96 & -0.28 & 1.56 & 1.58 & 0.94 \\
        & 200 & \(\alpha_1\) & 0.29 & 0.74 & 0.72 & 0.96 & -0.07 & 0.76 & 0.76 & 0.96 \\
        &  & \(\beta_1\) & 0.11 & 0.54 & 0.53 & 0.95 & 0.11 & 0.57 & 0.56 & 0.95 \\
        &  & \(\omega\) & 0.03 & 0.06 & 0.06 & 0.96 & 0.00 & 0.10 & 0.11 & 0.94 \\
        &  & \(\phi_1\) & -0.31 & 0.77 & 0.77 & 0.95 & 0.12 & 1.12 & 1.17 & 0.94 \\
        &  & \(\psi_1\) & -0.02 & 0.64 & 0.67 & 0.94 & -0.72 & 1.11 & 1.18 & 0.94 \\
        & 400 & \(\alpha_1\) & 0.00 & 0.53 & 0.54 & 0.94 & -0.15 & 0.54 & 0.54 & 0.94 \\
        &  & \(\beta_1\) & -0.10 & 0.38 & 0.36 & 0.96 & -0.12 & 0.40 & 0.39 & 0.95 \\
        &  & \(\omega\) & 0.01 & 0.05 & 0.04 & 0.96 & 0.04 & 0.07 & 0.08 & 0.95 \\
        &  & \(\phi_1\) & 0.18 & 0.55 & 0.56 & 0.94 & 0.07 & 0.81 & 0.76 & 0.96 \\
        &  & \(\psi_1\) & -0.14 & 0.45 & 0.45 & 0.95 & -0.60 & 0.80 & 0.79 & 0.95 \\
        \bottomrule
    \end{tabular*}
    \label{tab:sim_bias_DGP1_case2}
\end{table}

\begin{table}[h]
    \centering
    \caption{Simulation results of QMLE for DGP 1 in Example \ref{ex:3}.
        The table reports the Biases(\(\x10^3\)), ASDs(\(\x10^2\)), ESDs(\(\x10^2\)),
        and CPs of the QMLE, where \( \eta_{it} \) follow the standard normal distribution
        or the scaled \( t_5 \) distribution.
        The network densities corresponding to each \( N \) are provided in parentheses below.}
    \begin{tabular*}{\textwidth}{@{\extracolsep{\fill}}cccrcccrccc}
        \toprule
        &     &              & \multicolumn{4}{c}{Normal} & \multicolumn{4}{c}{\( t_5 \)}                                           \\
        \cmidrule(lr){4-7} \cmidrule(lr){8-11}
        \( N \)      & \( T \)   &    QMLE      & Bias                       & ASD                           & ESD  & CP   & Bias & ASD  & ESD  & CP   \\
        \midrule
        50 & 100 & \(\alpha_1\) & 0.01 & 2.05 & 2.10 & 0.95 & -0.10 & 2.11 & 2.11 & 0.95 \\
        (5.63\%)   &  & \(\beta_1\) & -0.56 & 1.51 & 1.45 & 0.96 & 0.30 & 1.60 & 1.65 & 0.94 \\
        &  & \(\omega\) & -0.04 & 0.18 & 0.18 & 0.94 & -0.04 & 0.27 & 0.28 & 0.95 \\
        &  & \(\phi_1\) & 0.20 & 2.13 & 2.14 & 0.94 & 0.35 & 2.91 & 2.98 & 0.92 \\
        &  & \(\psi_1\) & -1.14 & 1.81 & 1.85 & 0.94 & 0.64 & 2.99 & 3.15 & 0.94 \\
        & 200 & \(\alpha_1\) & -0.03 & 1.45 & 1.51 & 0.94 & -0.59 & 1.49 & 1.48 & 0.96 \\
        &  & \(\beta_1\) & 0.37 & 1.07 & 1.08 & 0.94 & 0.08 & 1.13 & 1.15 & 0.94 \\
        &  & \(\omega\) & -0.00 & 0.13 & 0.13 & 0.94 & -0.01 & 0.19 & 0.19 & 0.96 \\
        &  & \(\phi_1\) & -0.35 & 1.51 & 1.52 & 0.95 & -0.01 & 2.12 & 2.23 & 0.94 \\
        &  & \(\psi_1\) & -0.64 & 1.28 & 1.30 & 0.94 & 1.13 & 2.18 & 2.50 & 0.96 \\
        & 400 & \(\alpha_1\) & 0.07 & 1.03 & 0.98 & 0.96 & 0.23 & 1.06 & 1.07 & 0.95 \\
        &  & \(\beta_1\) & 0.36 & 0.76 & 0.76 & 0.95 & -0.28 & 0.80 & 0.80 & 0.96 \\
        &  & \(\omega\) & 0.05 & 0.09 & 0.09 & 0.95 & 0.02 & 0.14 & 0.14 & 0.95 \\
        &  & \(\phi_1\) & -0.52 & 1.06 & 1.06 & 0.95 & -0.52 & 1.53 & 1.60 & 0.94 \\
        &  & \(\psi_1\) & -0.31 & 0.91 & 0.90 & 0.94 & -0.39 & 1.55 & 1.53 & 0.95 \\
        200 & 100 & \(\alpha_1\) & 1.10 & 1.09 & 1.07 & 0.95 & -0.37 & 1.12 & 1.09 & 0.96 \\
        (1.43\%)   &  & \(\beta_1\) & -0.08 & 0.76 & 0.76 & 0.95 & -0.36 & 0.80 & 0.80 & 0.95 \\
        &  & \(\omega\) & -0.00 & 0.09 & 0.09 & 0.96 & -0.06 & 0.14 & 0.14 & 0.95 \\
        &  & \(\phi_1\) & 0.13 & 1.10 & 1.10 & 0.95 & -0.11 & 1.58 & 1.48 & 0.96 \\
        &  & \(\psi_1\) & -0.42 & 0.91 & 0.92 & 0.95 & 0.81 & 1.56 & 2.10 & 0.93 \\
        & 200 & \(\alpha_1\) & -0.08 & 0.77 & 0.79 & 0.95 & -0.15 & 0.79 & 0.78 & 0.95 \\
        &  & \(\beta_1\) & 0.02 & 0.54 & 0.54 & 0.93 & -0.03 & 0.57 & 0.58 & 0.94 \\
        &  & \(\omega\) & 0.03 & 0.06 & 0.06 & 0.95 & 0.03 & 0.10 & 0.10 & 0.94 \\
        &  & \(\phi_1\) & 0.01 & 0.78 & 0.78 & 0.95 & 0.05 & 1.15 & 1.16 & 0.96 \\
        &  & \(\psi_1\) & -0.30 & 0.64 & 0.64 & 0.95 & -1.07 & 1.12 & 1.12 & 0.94 \\
        & 400 & \(\alpha_1\) & -0.07 & 0.54 & 0.54 & 0.95 & -0.02 & 0.56 & 0.56 & 0.95 \\
        &  & \(\beta_1\) & 0.06 & 0.38 & 0.38 & 0.96 & -0.43 & 0.40 & 0.42 & 0.93 \\
        &  & \(\omega\) & -0.01 & 0.04 & 0.04 & 0.96 & 0.02 & 0.07 & 0.07 & 0.96 \\
        &  & \(\phi_1\) & 0.03 & 0.55 & 0.56 & 0.96 & -0.20 & 0.81 & 0.82 & 0.95 \\
        &  & \(\psi_1\) & 0.35 & 0.45 & 0.44 & 0.96 & 0.12 & 0.80 & 0.82 & 0.95 \\
        \bottomrule
    \end{tabular*}
    \label{tab:sim_bias_DGP1_case3}
\end{table}

In DGP 2, to observe the BIC performance, different order between network-autoregression terms and autoregression terms are set.
Let \( r_m=3 \) and obtain the \( (\hat{p},\hat{q}) \) by solving \eqref{eq:bic} for each replication.
The cases of underfitting, correct selection, and overfitting by BIC correspond
to \( \{\hat{p}<1 \text{ or } \hat{q}<2\} \), \( \{\hat{p}=1, \hat{q}=2\} \) 
and \( \{\hat{p}\geq1,\hat{q}>2\}\cup\{\hat{p}>1,\hat{q}\geq2\} \) respectively.
Table \ref{tab:sim_BIC} reports the frequencies of the three cases for different sample sizes and distributions. 
The performance of BIC is slightly more satisfactory 
when the network structure follows the randomly distributed structure compared to the other two cases. 
While \( \{\eta_{it}\} \) follows a normal distribution, 
it is more likely to select a lower order than the scaled \( t_5 \) distribution when \( T=100 \); 
however, the former exhibits a faster rate than the latter. 
The heavy tails of the \( t_5 \) distribution make BIC more likely to select higher-order models,
these phinominen may contribute to that better results when \( T=100 \). 
The number of nodes \( N \) has an important effect on the performance of BIC, especially when the time span is short.

\begin{table}[htb]
    \centering
    \caption{Simulation results of BIC for DGP 2.
        The table reports the frequencies of lower orders, correct orders and higher orders by BIC,
        where \( \eta_{it} \) follow the standard normal distribution or the scaled \( t_5 \) distribution.}
    \begin{tabular}{@{\extracolsep{\fill}}cccrrrrrrr}
        \toprule
            &      &         &     & \multicolumn{3}{c}{Normal} & \multicolumn{3}{c}{\( t_{5} \)}                                   \\
        \cmidrule(lr){5-7} \cmidrule(lr){8-10}
        N   & Case & Density & T   & Lower                      & Exact                           & Higher & Lower & Exact & Higher \\
        \midrule
        50  & 1    & 5.80\%  & 100 & 304                        & 695                             & 1      & 180   & 771   & 49     \\
            &      &         & 200 & 24                         & 972                             & 4      & 6     & 935   & 59     \\
            &      &         & 300 & 3                          & 996                             & 1      & 0     & 957   & 43     \\
            & 2    & 5.67\%  & 100 & 345                        & 652                             & 3      & 219   & 727   & 54     \\
            &      &         & 200 & 37                         & 962                             & 1      & 20    & 927   & 53     \\
            &      &         & 300 & 1                          & 999                             & 0      & 1     & 951   & 48     \\
            & 3    & 5.63\%  & 100 & 321                        & 672                             & 7      & 199   & 741   & 60     \\
            &      &         & 200 & 36                         & 961                             & 3      & 14    & 940   & 46     \\
            &      &         & 300 & 2                          & 992                             & 6      & 1     & 953   & 46     \\
        200 & 1    & 1.48\%  & 100 & 0                          & 994                             & 6      & 0     & 932   & 68     \\
            &      &         & 200 & 0                          & 996                             & 4      & 0     & 953   & 47     \\
            &      &         & 300 & 0                          & 995                             & 5      & 0     & 956   & 44     \\
            & 2    & 1.44\%  & 100 & 0                          & 990                             & 10     & 0     & 937   & 63     \\
            &      &         & 200 & 0                          & 998                             & 2      & 0     & 961   & 39     \\
            &      &         & 300 & 0                          & 997                             & 3      & 0     & 966   & 34     \\
            & 3    & 1.43\%  & 100 & 0                          & 996                             & 4      & 0     & 939   & 61     \\
            &      &         & 200 & 0                          & 999                             & 1      & 0     & 933   & 67     \\
            &      &         & 300 & 0                          & 998                             & 2      & 0     & 958   & 42     \\
        \bottomrule
    \end{tabular}
    \label{tab:sim_BIC}
\end{table}

\section{Applications}

In this section, we illustrate the performance of the proposed NDAR model on a real dataset. 
We collected stock data from the A-share markets of the Shanghai Stock Exchange and 
the Shenzhen Stock Exchange across three different industries: 
manufacturing (61 stocks), wholesale and retail (43 stocks), 
and information transmission, software, and information technology services (abbreviated to IT, 68 stocks). 
The dataset spans from January 3, 2023, to July 1, 2024, with a total of \( T = 359 \) trading days. 
For each stock \( i \), we computed the daily logarithmic return as the observed variable \( y_{it} \). 

The three categories of stocks were analyzed separately. For each category, the network structure 
was defined based on whether there were common shareholders among the top ten holders of the stocks. 
In practice, certain entities, such as Hong Kong Securities Clearing Company Ltd., are commonly 
listed as shareholders for many stocks. Including only one common shareholder would misrepresent 
the actual ownership structure, resulting in a very dense network and leading to unstable parameter 
estimates in the model. 
To address this issue, we only considered stocks with at least two common major shareholders among 
their top ten holders. Specifically, the network was constructed such that \( a_{ij} = 1 \) if there 
were at least two common holders between stock \( i \) and stock \( j \) within the top ten shareholders.

The average log return and the network structure for each category are presented. 
Figures \ref{fig:return_group_C} and \ref{fig:return_group_F} show that 
the average log returns of manufacturing and wholesale and retail stocks exhibit similar behavior, 
but the former demonstrates two large communities, while the latter consists of a single large network with a few mutually connected stocks. 
The IT and Software Services category exhibits higher volatility compared to the other two categories overall, 
with its network structure shown in Figure \ref{fig:return_group_I}.

The proposed model \ref{eq:NDAR_pq} was applied to fit the observed data for each category of stocks, 
with the model order selected based on the BIC defined in \eqref{eq:bic}.
After setting a maximum order of \( r_m = 5 \),
the orders selected for the three categories all were \( (1, 3) \). 
Table \ref{tab:real_data} presents the estimated parameters for each category of stocks. 
Statistical tests were conducted under the null hypothesis that the true parameter values are 0. 
For parameters in the mean parameters, two-sided tests were used, while for volatility parameters, 
one-sided tests were employed, then the p-values are calculated.

The results demonstrate that network structure influences stock returns to varying degrees across categories. 
In the manufacturing sector, the network effect terms in the mean (\( \alpha_1 = 0.0358 \)) and volatility 
(\( \phi_1 = 0.095 \)) components were both statistically significant, suggesting that ownership linkages 
affect the conditional expectation and variability of returns. For the wholesale and retail sector, the network 
term in the mean component was not significant (\( p\text{-value} = 0.616 \)), implying that autoregressive effects 
primarily drive the conditional expectation of returns. In the IT sector, significant network terms were found 
in both the mean (\( 0.186 \)) and volatility (\( 0.0504 \)) components, with the baseline variance (\( \omega \)) 
notably higher than in other sectors. Furthermore, the autoregressive terms in the volatility component were 
positive and significant across all sectors, confirming the persistence of return volatility.
The heterogeneity in network effects and suggest that the influence of network structures
varies depending on stock categories.

\begin{figure}
    \centering
    \begin{subfigure}{0.5\textwidth}
        \includegraphics[width=\textwidth]{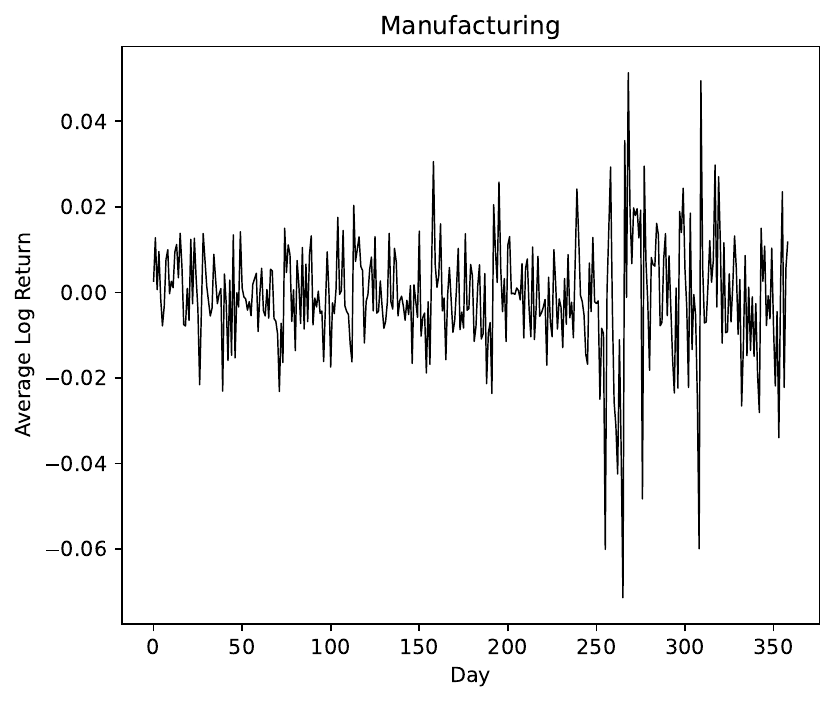}
        \caption{}
    \end{subfigure}
    \begin{subfigure}{0.42\textwidth}
        \includegraphics[width=\textwidth]{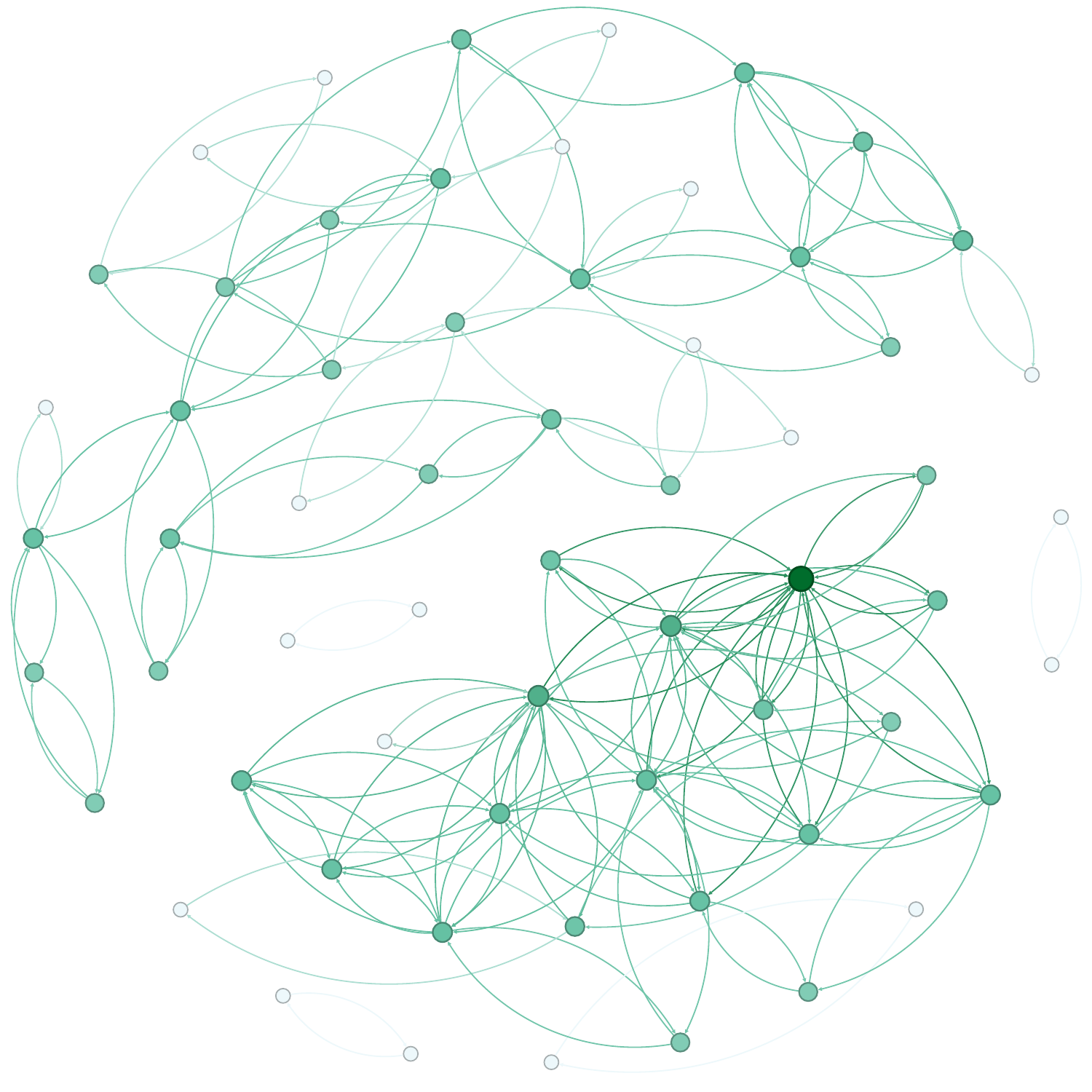}
        \caption{}
    \end{subfigure}
    \caption{Log returns and network structure of 61 manufacturing stocks. 
    (a) Average daily log returns, with January 3, 2023, as the time origin. 
    (b) Network matrix with a density of 4.75\%, where darker-colored nodes indicate higher in-degrees.}    
    \label{fig:return_group_C}
\end{figure}

\begin{figure}[htbp]
    \centering
    \begin{subfigure}{0.5\textwidth}
        \includegraphics[width=\textwidth]{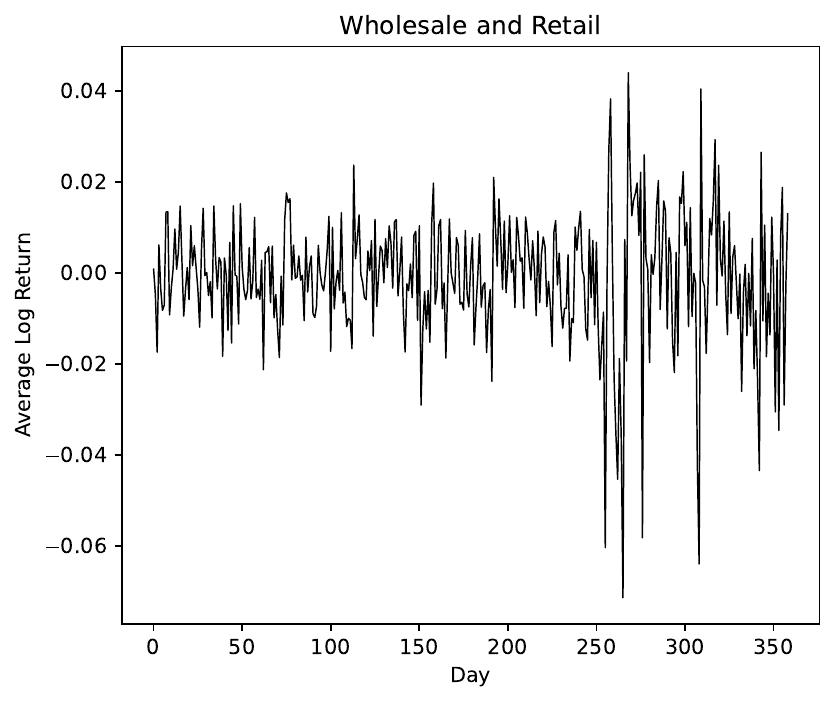}
        \caption{}
    \end{subfigure}
    \begin{subfigure}{0.42\textwidth}
        \includegraphics[width=\textwidth]{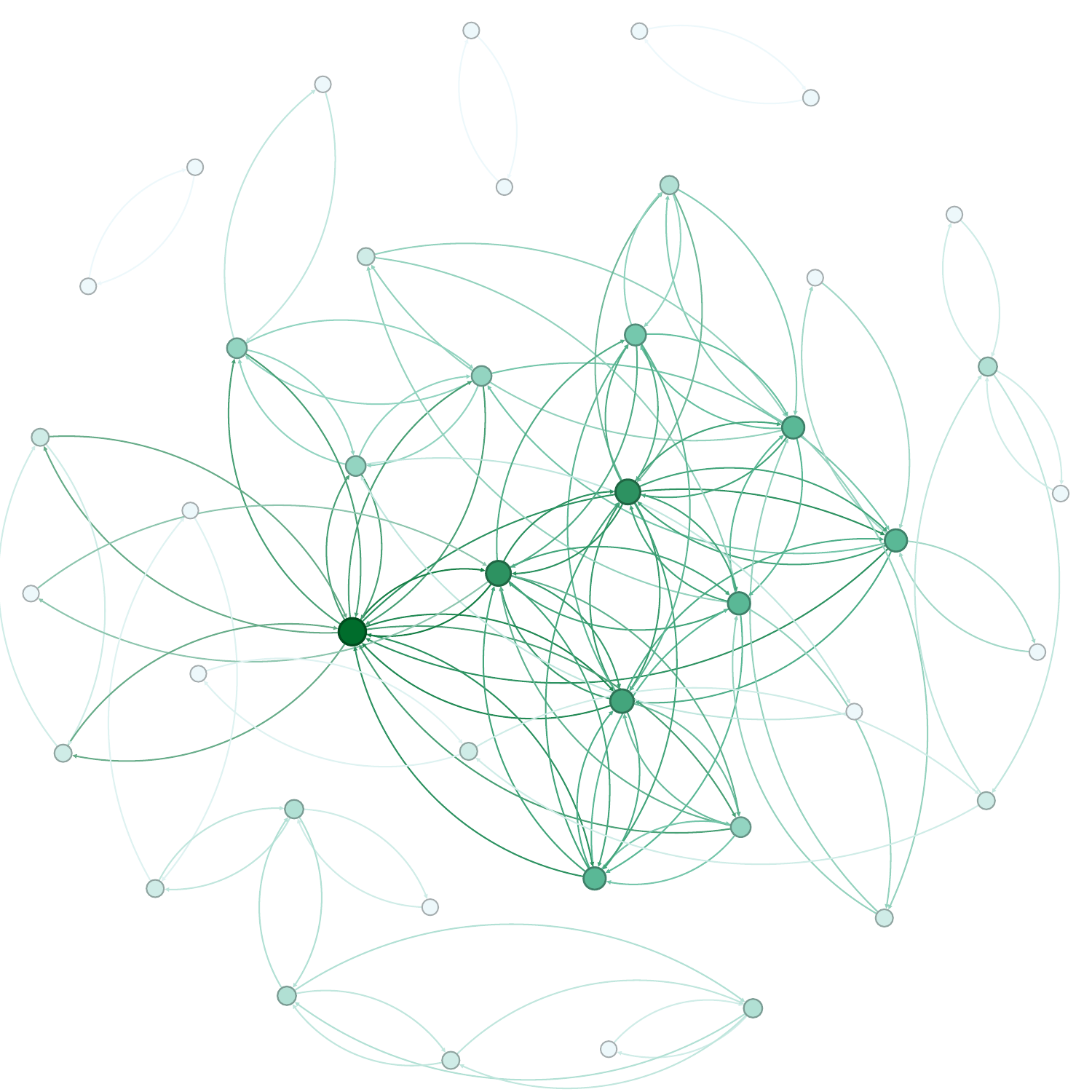}
        \caption{}
    \end{subfigure}
    \caption{Log returns and network structure of 43 wholesale and retail stocks. 
    (a) Average daily log returns, with January 3, 2023, as the time origin. 
    (b) Network matrix with a density of 6.98\%, where darker-colored nodes indicate higher in-degrees.}    
    \label{fig:return_group_F}
\end{figure}

\begin{figure}[htbp]
    \centering
    \begin{subfigure}{0.5\textwidth}
        \includegraphics[width=\textwidth]{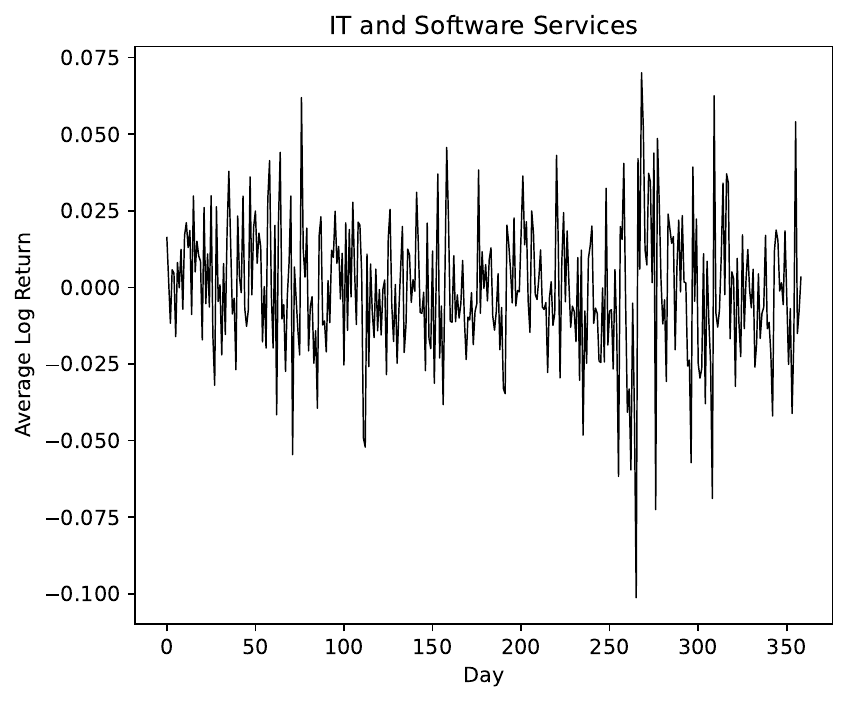}
        \caption{}
    \end{subfigure}
    \begin{subfigure}{0.42\textwidth}
        \includegraphics[width=\textwidth]{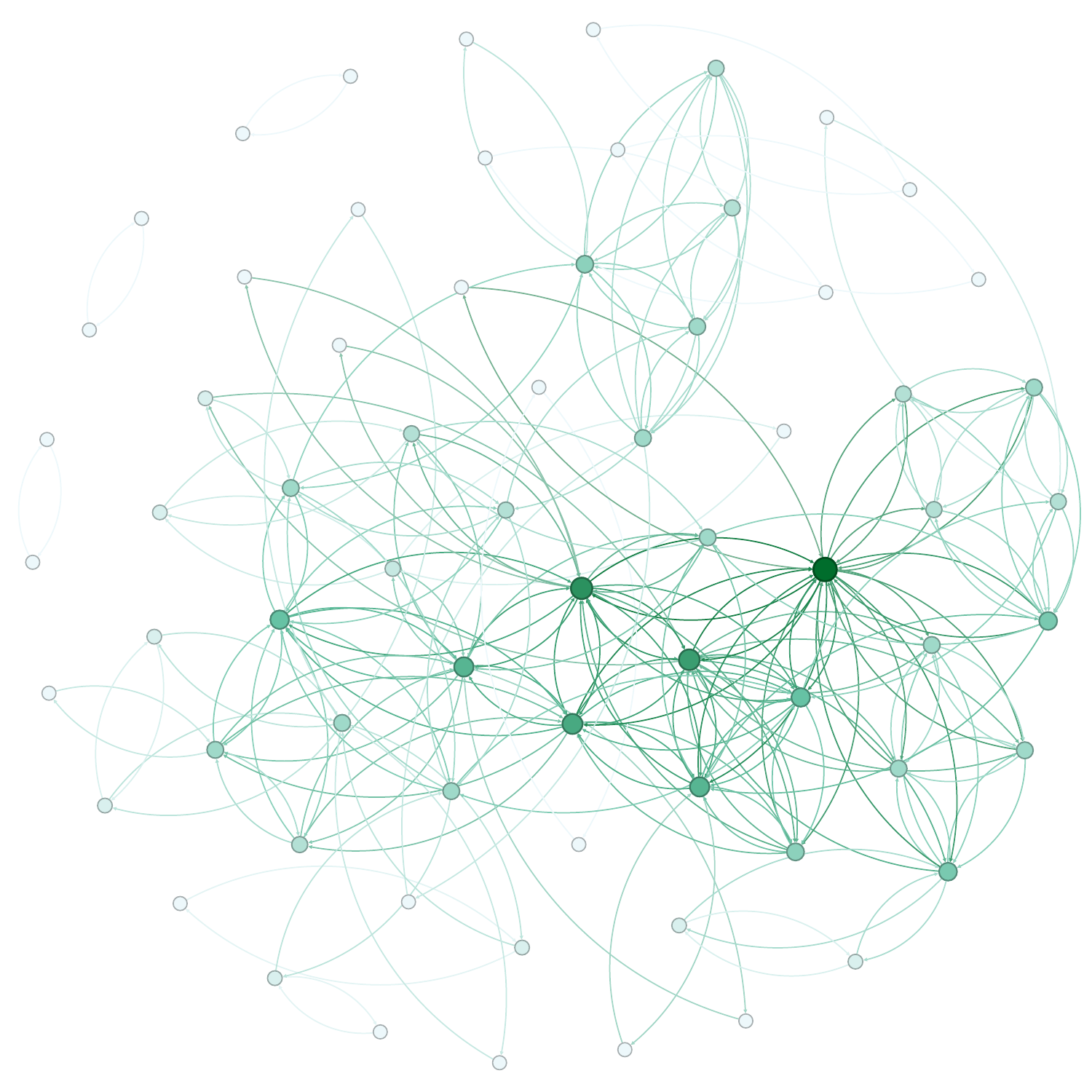}
        \caption{}
    \end{subfigure}
    \caption{Log returns and network structure of 68 IT stocks. 
    (a) Average daily log returns, with January 3, 2023, as the time origin. 
    (b) Network matrix with a density of 5.31\%, where darker-colored nodes indicate higher in-degrees.}    
    \label{fig:return_group_I}
\end{figure}

\begin{table}[htb]
    \centering
    \caption{The proposed NDAR model was applied to real stock data from three categories. 
    The model order for each category was determined to be (1, 3) based on the BIC. 
    Standard errors are provided in parentheses.}    
    \begin{tabular}{crrrrrrrrrr}
        \toprule
                           & \multicolumn{2}{c}{Manufacturing} & \multicolumn{2}{c}{Wholesale and Retail} & \multicolumn{2}{c}{IT}                                                 \\
        \cmidrule(lr){2-3} \cmidrule(lr){4-5} \cmidrule(lr){6-7}
        Parameter             & Estimate                      & \( p \)-value                            & Estimate                                & \( p \)-value & Estimate  & \( p \)-value \\
        \midrule
        \( \alpha_1 \) & .0358(.0097)                     & <.001                                    & .0049(.0097)                                & .616          & .0186(.0087) & .034          \\
        \( \beta_1 \)  & .0215(.0096)                     & .025                                     & .0246(.0111)                                & .026          & .0073(.0093) & .433          \\
        \( \beta_2 \)  & .0259(.0087)                     & .028                                     & .0715(.0103)                                & <.001         & .0209(.0079) & .008          \\
        \( \beta_3 \)  & .0151(.0086)                     & .080                                     & .0404(.0104)                                & <.001         & .0080(.0078) & .304          \\
        \( \omega \)   & .0002(.0000)                     & <.001                                    & .0002(.0000)                                & <.001         & .0006(.0000) & <.001         \\
        \( \phi_1 \)   & .0950(.0276)                     & <.001                                    & .0350(.0147)                                & .009          & .0504(.0152) & <.001         \\
        \( \psi_1 \)   & .3002(.0466)                     & <.001                                    & .2500(.0361)                                & <.001         & .2519(.0277) & <.001         \\
        \( \psi_2 \)   & .2030(.0395)                     & <.001                                    & .2226(.0345)                                & <.001         & .1792(.0238) & <.001         \\
        \( \psi_3 \)   & .2028(.0384)                     & <.001                                    & .2649(.0362)                                & <.001         & .1733(.0230) & <.001         \\
        \bottomrule
    \end{tabular}
    \label{tab:real_data}
\end{table}

\section{Discussion}

In this paper, we propose a network double autoregressive model
to describe the dynamic evolution of multi time series data. 
We apply a quasi-maximum likelihood estimation method to estimate model parameters, 
and establish the consistency and asymptotic normality of the estimators. 
To address the model selection problem with unknown lag orders, 
we propose a Bayesian Information Criterion based method for selecting the lag order. 
Simulation studies demonstrate that the proposed method performs well in finite samples. 
Furthermore, we apply our method to three categories of stock data, 
constructing network relationships based on common shareholders. 
Statistical inference shows that network effects may influence the conditional expectation, 
and variance of stock returns.

There are several potential directions to extend this model.
First, while the Bayesian Information Criterion is used for model selection, 
the risk of overfitting still exists
since redundant parameters located in non-ending positions may still be selected. 
Future research could explore theories for cases where the true values 
of some parameters in the volatility component are zero, enabling richer statistical inference. 
Second, in our model, the heterogeneity of observations across different nodes 
is determined solely by their network topological structure, without taking into account the heterogeneity of parameters.
When the data involve more nodes, it may be more reasonable in practical applications 
to classify nodes and allow different nodes to have distinct parameters. 
Finally, our model is based on time series data, but in real-world applications, 
considering node-specific covariates could provide additional information to model data. 

\bibliographystyle{plainnat}
\bibliography{references}

\appendix  %
\renewcommand{\theequation}{A\arabic{equation}}  %
\renewcommand{\thesection}{A\arabic{section}}
\renewcommand{\thetheorem}{A\arabic{theorem}}

\section*{Appendix}

\subsection*{Proof of Proposition~\ref{prop:strictly_stationary}}

This proposition follows directly from Theorem 1 in \cite{lu2001l1}. 
For this proposition, without loss of generality, 
we assume the model \eqref{eq:NDAR_pq} has order \( (p, p) \)
with additional parameters set to zero.
The model \eqref{eq:NDAR_pq} can be reformulated as:
\[
    \bm{y}_t = f(\bm{y}_{t-1}, \ldots, \bm{y}_{t-p}) + H(\bm{y}_{t-1}, \ldots, \bm{y}_{t-p})\bm{\eta}_t,
\]
where 
\[
    f = \sum_{r=1}^{p} (\alpha W + \beta I) \bm{y}_{t-r},
    \quad
    H^2 = \omega I + \mathrm{Diag}\left\{\sum_{r=1}^{p}(\phi W + \psi I) \bm{x}_{t-r}\right\}.
\]
The result in Proposition~\ref{prop:strictly_stationary} follows directly from the inequality conditions (B2) in \cite{lu2001l1}.

\subsection*{Proof of Theorem~\ref{theorem:normal}}

Let \( \varepsilon_{it}(\mu) = y_{it} - \bz_{1it}'\mu \) and \( h_{it}(\sigma) = \bz_{2it}'\sigma \). 
Then
\[
    \varepsilon_{it}\left( \mu \right)
    = \varepsilon_{it}\left( \mu_0 \right)
    + \left( \mu_0-\mu \right)' \bz_{1it}
    = \eta_{it}\sqrt{\sigma_0' \bz_{2it}} + \left( \mu_0-\mu \right)' \bz_{1it}
\]
and the log-likelihood contribution is
\[
    l_{it}\left( \theta \right) = -\frac{1}{2}\left( \log{h_{it}(\sigma)} + \frac{\varepsilon_{it}^{2}(\mu)}{h_{it}(\sigma)} \right)
\]
Here, we list derivatives of $ l_{it}\left( \theta \right)$:
\begin{equation*}
    \begin{aligned}
        \frac{\partial l_{it}(\theta)}{\partial \mu} & = \frac{\varepsilon_{it}(\mu)}{h_{it}(\sigma)}\bz_{1it}, \\
        \frac{\partial l_{it}(\theta)}{\partial \sigma} & = \frac{1}{2h_{it}(\sigma)}\left( \frac{\varepsilon_{it}^{2}(\mu)}{h_{it}(\sigma)}-1 \right)\bz_{2it}.
    \end{aligned}
\end{equation*}
Next, we have:
\begin{equation*}
    \begin{aligned}
        \frac{\partial l_{it}(\theta)}{\partial \mu} \frac{\partial l_{it}(\theta)}{\partial \theta'_{1}} & = \frac{\varepsilon_{it}^{2}(\mu)}{h_{it}^{2}(\sigma)} \bz_{1it}\bz'_{1it}, \\
        \frac{\partial l_{it}(\theta)}{\partial \mu} \frac{\partial l_{it}(\theta)}{\partial \theta'_{2}} & = \frac{1}{2}\frac{\varepsilon_{it}(\mu)}{h^{2}_{it}(\sigma)} \left( \frac{\varepsilon^{2}_{it}(\mu)}{h_{it}(\sigma)} - 1 \right) \bz_{1it}\bz'_{2it}, \\
        \frac{\partial l_{it}(\theta)}{\partial \sigma} \frac{\partial l_{it}(\theta)}{\partial \theta'_{2}} & = \frac{1}{4}\frac{1}{h^{2}_{it}(\sigma)} \left( \frac{\varepsilon^{2}_{it}(\mu)}{h_{it}(\sigma)} - 1 \right)^{2} \bz_{2it}\bz'_{2it}.
    \end{aligned}
\end{equation*}
The second derivatives are
\begin{equation*}
    \begin{aligned}
        \frac{\partial^{2} l_{it}(\theta)}{\partial \mu \partial \mu'} & = -\frac{1}{h_{it}(\sigma)}\bz_{1it}\bz'_{1it}, \\
        \frac{\partial^{2} l_{it}(\theta)}{\partial \mu \partial \sigma'} & = -\frac{\varepsilon_{it}(\mu)}{h_{it}^{2}(\sigma)} \bz_{1it}\bz'_{2it}, \\
        \frac{\partial^{2} l_{it}(\theta)}{\partial \sigma \partial \sigma'} & = \frac{1}{h_{it}^{2}(\sigma)}\left( \frac{1}{2} -\frac{\varepsilon_{it}^{2}(\mu)}{h_{it}(\sigma)} \right) \bz_{2it}\bz'_{2it}.
    \end{aligned}
\end{equation*}

Recall that
\begin{equation*}
    \everymath{\displaystyle}
    D = \begin{pmatrix}
        1 & \frac{\kappa_3}{\sqrt{2}} \\ \frac{\kappa_3}{\sqrt{2}} & \frac{\kappa_4-1}{2}
    \end{pmatrix}
    ~
    \varGamma_{it}(\theta) = \Diag{\frac{\bz_{1it}}{\sqrt{h_{it}(\sigma)}}, \frac{\bz_{2it}}{\sqrt{2}h_{it}(\sigma)}}
\end{equation*}
and 
\begin{equation*}
    \begin{aligned}
        \Sigma_0 &\equiv E \left[ \frac{\partial l_{it}(\theta_0)}{\partial \theta} \frac{\partial l_{it}(\theta_0)}{\partial \theta'} \right] = E \left[ \varGamma_{it}(\theta_0) D  \varGamma'_{it}(\theta_0) \right]
        \\
        \Omega_0 &\equiv -E \left[ \frac{\partial^{2} l_{it}(\theta_0)}{\partial \theta \partial \theta'} \right] = E \left[ \varGamma_{it}(\theta_0)  \varGamma'_{it}(\theta_0) \right]
        \\
    \end{aligned}
\end{equation*}
Denote the element-wise bounds of \( \mu \) and \( \sigma \) as
\[
    \bar{\mu} = \bar{b}_1 \bm{1}_{p+q},\quad \bar{\sigma} = \bar{b}_2\bm{1}_{p+q+1},\quad \text{and} \quad \underline{\sigma} = \underline{b}_2\bm{1}_{p+q+1}
\]

We need to prove several lemmas first, and the proof of the theorem is given at the end of this subsection.

\begin{lemma}\label{lemma1}
    Let \( g(\by, \theta) \) be a measurable function of \( y \) in Euclidean space for each \( \theta\in\Theta \),
    a compact subset of \( \R^{p} \), and a continuous function of \( \theta\in\Theta \) for each \( \by \).
    Suppose that \( \left\{ {\by_t} \right\}  \) is a sequence of strictly stationary
    and ergodic time series such that 
    \( Eg(\by_t,\theta)=0 \) and \( E\sup_{\theta\in\Theta}|g(\by_t,\theta)|<\infty \).
    Then \( \sup_{\theta\in\Theta}|T^{-1}\sum_{t=T}^{n}g(\by_t,\theta)| = o_{p}(1) \).
\end{lemma}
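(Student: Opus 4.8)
The plan is to prove the stronger almost-sure statement $\sup_{\theta\in\Theta}\big|T^{-1}\sum_{t=1}^{T} g(\by_t,\theta)\big|\to 0$ a.s.\ (the displayed sum index and normalization should read $\sum_{t=1}^{T}$ and $T^{-1}$), from which the stated $o_p(1)$ conclusion follows immediately. The argument is the classical compactness-plus-equicontinuity route to a uniform ergodic theorem. First I would fix $\theta\in\Theta$ and apply Birkhoff's ergodic theorem to the strictly stationary ergodic sequence $\{g(\by_t,\theta)\}$: since $E|g(\by_t,\theta)|\le E\sup_{\theta\in\Theta}|g(\by_t,\theta)|<\infty$ and $Eg(\by_t,\theta)=0$, we obtain $T^{-1}\sum_{t=1}^T g(\by_t,\theta)\to 0$ a.s. Pointwise convergence alone does not deliver uniformity, so the substance of the proof is upgrading it over the compact parameter set.

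To control local oscillations, for each $\theta^{\ast}\in\Theta$ and $\delta>0$ I would introduce the modulus
\[
    \rho(\by,\theta^{\ast},\delta)=\sup_{\theta\in\Theta,\,|\theta-\theta^{\ast}|<\delta}\bigl|g(\by,\theta)-g(\by,\theta^{\ast})\bigr|,
\]
which is measurable in $\by$ because, by continuity of $g(\by,\cdot)$, the supremum may be taken over a countable dense subset of the ball. Continuity of $g$ in $\theta$ gives $\rho(\by,\theta^{\ast},\delta)\downarrow 0$ as $\delta\downarrow 0$ for each fixed $\by$, and since $\rho$ is dominated by the integrable envelope $2\sup_{\theta\in\Theta}|g(\by,\theta)|$, dominated convergence yields $E\rho(\by_t,\theta^{\ast},\delta)\to 0$ as $\delta\to 0$. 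Hence, given $\varepsilon>0$, for each $\theta^{\ast}$ I can select a radius $\delta(\theta^{\ast})$ with $E\rho(\by_t,\theta^{\ast},\delta(\theta^{\ast}))<\varepsilon$. The open balls $\{B(\theta^{\ast},\delta(\theta^{\ast}))\}_{\theta^{\ast}\in\Theta}$ cover $\Theta$, so by compactness I extract a finite subcover with centres $\theta_1,\dots,\theta_K$.

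Finally, for any $\theta\in\Theta$, choosing an index $k$ with $\theta\in B(\theta_k,\delta(\theta_k))$, the triangle inequality gives
\[
    \Bigl|T^{-1}\sum_{t=1}^T g(\by_t,\theta)\Bigr|\le \max_{1\le j\le K}\Bigl|T^{-1}\sum_{t=1}^T g(\by_t,\theta_j)\Bigr|+\max_{1\le j\le K}T^{-1}\sum_{t=1}^T \rho(\by_t,\theta_j,\delta(\theta_j)).
\]
Taking the supremum over $\theta$ and then $\limsup_T$: the first term vanishes a.s.\ by the pointwise result applied at the finitely many centres, while the second converges a.s.\ to $\max_{j} E\rho(\by_t,\theta_j,\delta(\theta_j))<\varepsilon$ by the ergodic theorem applied to each integrable $\rho(\by_t,\theta_j,\delta(\theta_j))$. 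Thus $\limsup_T\sup_{\theta\in\Theta}|T^{-1}\sum_{t=1}^T g(\by_t,\theta)|\le\varepsilon$ a.s., and letting $\varepsilon\downarrow 0$ along a sequence (unioning the countably many null sets) completes the proof. The step I expect to be most delicate is the mean-equicontinuity claim $E\rho(\by_t,\theta^{\ast},\delta)\to 0$: it relies jointly on the measurability of the oscillation $\rho$ and on the applicability of dominated convergence, which is precisely where the integrable-envelope hypothesis $E\sup_{\theta\in\Theta}|g|<\infty$ and the continuity of $g$ in $\theta$ are indispensable.
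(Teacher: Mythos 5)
Your proof is correct, but it takes a different route from the paper only in the sense that the paper offers no argument at all: it simply cites this statement as Theorem 3.1 of Ling and McAleer (2003) and stops. What you have written is essentially the standard proof that underlies that cited theorem --- Birkhoff's ergodic theorem pointwise, an oscillation modulus $\rho(\by,\theta^{\ast},\delta)$ made measurable via a countable dense subset, mean-equicontinuity by dominated convergence against the integrable envelope $2\sup_{\theta}|g|$, a finite subcover by compactness, and a triangle-inequality patching step --- so your write-up is a self-contained substitute for the citation rather than an alternative method. The one point worth flagging explicitly, which you already handle implicitly, is that for fixed $\theta$ (and for each fixed oscillation function $\rho(\cdot,\theta_j,\delta(\theta_j))$) the sequence obtained by composing with $\{\by_t\}$ is itself strictly stationary and ergodic, being a fixed measurable transformation of a stationary ergodic process, so Birkhoff applies at each of the finitely many centres. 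You also correctly identify the typographical error in the statement (the sum should run $\sum_{t=1}^{T}$). What your approach buys is transparency and independence from the external reference; what the paper's citation buys is brevity. Either is acceptable, and your argument closes the gap the paper leaves open.
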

\begin{proof}
    This is Theorem 3.1 in \cite{ling2003asymptotic}.
\end{proof}
\begin{lemma}\label{lemma2}
    If the conditions of theorem \ref{theorem:normal} hold, then
    \begin{enumerate}[label=(\roman*), font=\normalfont]
        \item $\displaystyle
                  E\sup_{\theta\in\Theta}\left\lvert L_{t}(\theta) \right\rvert < \infty ; $
        \item \(\displaystyle E\sup_{\theta\in\Theta} \left\lVert \frac{\partial L_{t}(\theta)}{\partial\theta} \right\rVert < \infty ; \)
        \item $\displaystyle
                  E\sup_{\theta\in\Theta} \left\lVert \frac{\partial^2 L_{t}(\theta)}{\partial\theta \partial\theta'} \right\rVert < \infty . $
    \end{enumerate}
\end{lemma}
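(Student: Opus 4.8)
The plan is to reduce all three moment bounds to finite moments of $\eta_{it}$ by exploiting the self-normalizing structure of the model: every data-dependent quantity appearing in a numerator is dominated by the volatility term $h_{it}(\sigma)=\bz_{2it}'\sigma$ in the denominator, so that after cancellation only $\theta$-free multiples of $|\eta_{it}|$ and $\eta_{it}^{2}$ (plus, for (i), a logarithmic term) survive. Two structural facts drive everything. First, since every component of $\sigma$ exceeds $\underline{b}_2>0$ under Assumption \ref{assumption:space} and every component of $\bz_{2it}$ is nonnegative (the leading $1$, the terms $\bw_i'\bx_{t-r}\ge0$, and $x_{i,t-r}=y_{i,t-r}^2\ge0$), one has the uniform lower bound $h_{it}(\sigma)\ge\underline{b}_2\|\bz_{2it}\|_1$ for every $\sigma$ in the parameter space. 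Second, because each row $\bw_i$ is a probability vector, Jensen's inequality gives $(\bw_i'\by_{t-r})^2\le\bw_i'\bx_{t-r}$, and combined with $y_{i,t-r}^2=x_{i,t-r}$ this yields the term-by-term domination $\|\bz_{1it}\|^2\le\|\bz_{2it}\|_1$.

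First I would establish a uniform bound on the standardized residual. Using the identity $\varepsilon_{it}(\mu)=\eta_{it}\sqrt{\sigma_0'\bz_{2it}}+(\mu_0-\mu)'\bz_{1it}$ and $(a+b)^2\le2a^2+2b^2$, I obtain $\sup_{\mu}\varepsilon_{it}^2(\mu)\le2\eta_{it}^2\,\sigma_0'\bz_{2it}+2C\|\bz_{1it}\|^2$ with $C=\sup_{\theta\in\Theta}\|\mu_0-\mu\|^2<\infty$ by compactness. Combining this with $\sigma_0'\bz_{2it}\le\bar{b}_2\|\bz_{2it}\|_1$, the domination $\|\bz_{1it}\|^2\le\|\bz_{2it}\|_1$, and the lower bound on $h_{it}(\sigma)$, the factor $\|\bz_{2it}\|_1$ cancels and leaves
\[
    \sup_{\theta\in\Theta}\frac{\varepsilon_{it}^2(\mu)}{h_{it}(\sigma)}\le c_1\eta_{it}^2+c_2
\]
for constants depending only on $\bar{b}_2,\underline{b}_2,C$. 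Since $E\eta_{it}^2=1$, this bound is integrable, and by the same reasoning $\sup_{\theta}|\varepsilon_{it}(\mu)|/\sqrt{h_{it}(\sigma)}\le c_3|\eta_{it}|+c_4$.

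With these reductions, the three claims follow by substituting the structural bounds into the explicit formulas already listed for $l_{it}$ and its derivatives. For (i), the quadratic term is controlled by the display above; for the log term, $h_{it}(\sigma)\ge\underline{b}_2$ and $h_{it}(\sigma)\le\bar{b}_2\|\bz_{2it}\|_1$ give $\sup_{\sigma}|\log h_{it}(\sigma)|\le c_5+\log^+\|\bz_{2it}\|_1$, and since $\|\bz_{2it}\|_1$ is a degree-two polynomial in $\by$, the inequality $\log x\le\delta^{-1}x^{\delta}$ with Assumption \ref{assumption:y} (taking $\delta=u/2$) shows $E\log^+\|\bz_{2it}\|_1<\infty$. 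For (ii) and (iii) I substitute $\|\bz_{1it}\|^2\le\|\bz_{2it}\|_1$, $\|\bz_{2it}\|\le\|\bz_{2it}\|_1$, and $h_{it}(\sigma)\ge\underline{b}_2\|\bz_{2it}\|_1$ into each block of the gradient and Hessian; in every block the powers of $\|\bz_{2it}\|_1$ in numerator and denominator match exactly and cancel, leaving bounds of the form $c|\eta_{it}|+c'$ or $c\eta_{it}^2+c'$. Because $N$ is fixed, summing over $i=1,\ldots,N$ preserves integrability, so $E\sup_{\theta}|L_t|\le\sum_i E\sup_{\theta}|l_{it}|<\infty$, and likewise for the derivative bounds.

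The main obstacle is ensuring the degree matching is exact, so that the potentially heavy-tailed $\|\bz_{2it}\|_1$ cancels completely; this is precisely the self-normalization that allows DAR-type models to require only a fractional moment of $\by_t$. The delicate point is the $\mu$-block, where the numerator is only linear in $\bz_{1it}$ while $h_{it}$ is quadratic: it is the Jensen domination $\|\bz_{1it}\|^2\le\|\bz_{2it}\|_1$ that makes $|\varepsilon_{it}(\mu)|\,\|\bz_{1it}\|/h_{it}(\sigma)$ collapse to a $\theta$-free bound times $|\eta_{it}|+1$. One must also treat the supremum over $\mu$ with care, since $\mu$ enters $\varepsilon_{it}^2$ quadratically, and compactness of $\Theta$ is what keeps $\sup_{\mu}\|\mu_0-\mu\|^2$ finite. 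Finally, this argument implicitly relies on the ordering condition $p_1\le p_2$, $q_1\le q_2$ flagged after \eqref{eq:NDAR_pq}: every mean regressor must possess a matching volatility regressor for the term-by-term domination to hold.
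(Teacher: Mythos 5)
Your proposal is correct and follows essentially the same route as the paper's proof: the uniform lower bound $h_{it}(\sigma)\ge \underline{b}_2\lVert\bz_{2it}\rVert_1$, the Jensen-type domination $\lVert\bz_{1it}\rVert^2\lesssim\lVert\bz_{2it}\rVert_1$, the decomposition $\varepsilon_{it}(\mu)=\eta_{it}\sqrt{\sigma_0'\bz_{2it}}+(\mu_0-\mu)'\bz_{1it}$, and the fractional-moment control of the logarithmic term are exactly the ingredients in the paper's inequalities \eqref{lemma2:ieq1}--\eqref{ieq:e2_d_h}. The only (immaterial) differences are that you derive pointwise bounds of the form $c_1\eta_{it}^2+c_2$ where the paper bounds expectations directly, and you control the log term via $\log x\le\delta^{-1}x^{\delta}$ rather than via Jensen's inequality applied to $E\log(\cdot)$.
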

\begin{proof}
    (i)
    Since \( E\left\lVert \by_t \right\rVert^u < \infty \),
    by the \( C_r \) inequality and Jensen's inequality, we have
    \begin{equation}\label{lemma2:ieq1}
        \begin{aligned}
            E \log h_{it}(\bar{\sigma}) 
            & = \frac{2}{u} E \log h_{it}^{\frac{u}{2}}(\bar{\sigma}) \\
            & \leq \frac{2}{u} E \log \left( \bar{b}_2^{*\frac{u}{2}} + \sum_{r=1}^{p} \bar{b}_2^{\frac{u}{2}} n_{i}^{-\frac{u}{2}} \sum_{j \neq i} a_{ij} |y_{j(t-r)}|^u + \sum_{r=1}^{q} \bar{b}_2^{\frac{u}{2}} |y_{i(t-r)}|^u \right) \\
            & \leq \frac{2}{u} \log \left( \bar{b}_2^{*\frac{u}{2}} + \sum_{r=1}^{p} \bar{b}_2^{\frac{u}{2}} n_{i}^{-\frac{u}{2}} \sum_{j \neq i} a_{ij} E|y_{j(t-r)}|^u + \sum_{r=1}^{q} \bar{b}_2^{\frac{u}{2}} E|y_{i(t-r)}|^u \right) \\
            & < \infty,
        \end{aligned}
    \end{equation}
    where \( \bar{b}_2^* = \max\{\bar{b}_2, 1\} \).
    By \eqref{lemma2:ieq1}, we obtain
    \begin{equation}\label{ieq:logh}
        \begin{aligned}
            E \sup_{\theta \in \Theta} |\log h_{it}(\sigma)| 
            & \leq E \sup_{\theta \in \Theta} \left( I_{\{h_{it}(\sigma) \geq 1\}} \log h_{it}(\sigma) \right) \\
            & \quad + E \sup_{\theta \in \Theta} \left( -I_{\{h_{it}(\sigma) < 1\}} \log h_{it}(\sigma) \right) \\
            & \leq E \log h_{it}(\bar{b}_2) - I_{\{\underline{b}_2 < 1\}} \log \underline{b}_2 \\
            & < \infty.
        \end{aligned}
    \end{equation}
    By norm inequalities, we derive some useful inequalities:
    \begin{align}
        &\left\lVert \bz_{2it} \right\rVert  \leq \left\lVert \bz_{2it} \right\rVert_{1}, \label{ieq:z2_norm2} \\
        &\left\lVert \bz_{1it} \right\rVert^2  \leq (p+q) \left\lVert \bz_{2it} \right\rVert_{1}, \label{ieq:z1_norm2}\\
        &E \sup_{\theta \in \Theta} \left\lVert \frac{\bz_{2it}}{h_{it}(\sigma)} \right\rVert_{1}^{k} 
         \leq E \frac{\left\lVert \bz_{2it} \right\rVert_{1}^k}{h_{it}^k(\underline{\sigma})} 
         = E \frac{\left\lVert \bz_{2it} \right\rVert_{1}^k}{\left\lVert \underline{\sigma}' \bz_{2it} \right\rVert_{1}^k} 
         = E \frac{\left\lVert \bz_{2it} \right\rVert_{1}^k}{\underline{b}_2 \left\lVert \bz_{2it} \right\rVert_{1}^k} 
         < \infty, \quad k = 1, 2, \ldots \label{ieq:z2_norm1} 
    \end{align}
    where \( \left\lVert \cdot \right\rVert_{p} \) denotes the \( l_p \) norm, 
    and the subscript is omitted when \( p=2 \).
    By the relation \( \varepsilon_{it}(\mu) = \varepsilon_{it}(\mu_0) - (\mu_0 - \mu)' \bz_{1it} \), it follows that
    \begin{equation}\label{ieq:e2_d_h}
        \begin{aligned}
            E \sup_{\theta \in \Theta} \left\lvert \frac{\varepsilon_{it}^2(\mu)}{h_{it}(\sigma)} \right\rvert
            & \leq 2E \sup_{\theta \in \Theta} \frac{\sigma'_0 \bz_{2it}}{h_{it}(\sigma)} 
            + 2E \sup_{\theta \in \Theta} \frac{\left( (\mu_0 - \mu)' \bz_{1it} \right)^2}{h_{it}(\sigma)} \\
            & \leq 2E \frac{\bar{\sigma}' \bz_{2it}}{\underline{\sigma}' \bz_{2it}} 
            + 2C E \sup_{\theta \in \Theta} \left\lVert \frac{\bz_{1it}}{h^{1/2}_{it}(\sigma)} \right\rVert^2 \\
            & \leq 2\frac{\bar{\sigma}}{\underline{\sigma}} + 2C E \sup_{\theta \in \Theta} \left\lVert \frac{\bz_{1it}}{h^{1/2}_{it}(\sigma)} \right\rVert^2 \\
            & < \infty.
        \end{aligned}
    \end{equation}
    Here, the last inequality holds due to 
    \eqref{ieq:z1_norm2} and \eqref{ieq:z2_norm1}.
    
    Finally, by \eqref{ieq:logh}, \eqref{ieq:e2_d_h}, and the fact that \( |L_t(\theta)| \leq \sum_{i=1}^{N} |l_{it}(\theta)| \),
    statement (i) can be proven.

    (ii) By \eqref{ieq:z2_norm2}, \eqref{ieq:z1_norm2} and \eqref{ieq:e2_d_h}, we have
    \begin{equation*}
        \begin{aligned}
            E\sup_{\theta\in\Theta}\left\lVert \frac{\partial l_{it}(\theta)}{\partial\mu} \right\rVert & = E\sup_{\theta\in\Theta} \left\lVert \bz_{1it}\frac{\varepsilon_{it}(\mu)}{h_{it}(\sigma)} \right\rVert \\
                                                                                                               & \leq E\sup_{\theta\in\Theta} \left\lvert \frac{\varepsilon_{it}(\mu)}{h^{1/2}_{it}(\sigma)} \right\rvert \left\lVert \frac{\bz_{1it}}{h^{1/2}_{it}(\sigma)} \right\rVert < \infty \\
        \end{aligned}
    \end{equation*}
    \begin{equation*}
        \begin{aligned}
            E\sup_{\theta\in\Theta}\left\lVert \frac{\partial l_{it}(\theta)}{\partial\sigma} \right\rVert & = E\sup_{\theta\in\Theta} \left\lVert \bz_{2it}\frac{1}{2h_{it}(\sigma)}\left( \frac{\varepsilon_{it}^{2}(\mu)}{h_{it}(\sigma)}-1 \right) \right\rVert \\
                                                                                                               & \leq E\sup_{\theta\in\Theta} \left\lVert \frac{\bz_{2it}}{2h_{it}(\sigma)}  \right\rVert \left\lvert \frac{\varepsilon_{it}^{2}(\mu)}{h_{it}(\sigma)}-1 \right\rvert  < \infty
        \end{aligned}
    \end{equation*}

    (iii) Similarly, by \eqref{ieq:z2_norm2}, \eqref{ieq:z1_norm2} and \eqref{ieq:e2_d_h}, we have
    \begin{equation*}
        \begin{aligned}
            E\sup_{\theta\in\Theta}\left\lVert \frac{\partial^2 l_{it}(\theta)}{\partial\mu \partial\theta'_{1}} \right\rVert   & = E\sup_{\theta\in\Theta}\left\lVert \bz_{1it}\bz'_{1it}/h_{it}(\sigma) \right\rVert                                                                                                                                                                \\
                                                                                                                                       & =  E\sup_{\theta\in\Theta} \left\lVert \bz_{1it} \right\rVert^2 /h_{it}(\sigma)                                                                                                                                                                          \\
                                                                                                                                       & < \infty,                                                                                                                                                                                                                                            \\
            E\sup_{\theta\in\Theta}\left\lVert \frac{\partial^2 l_{it}(\theta)}{\partial\mu \partial\theta'_{2}} \right\rVert^2 & = E\sup_{\theta\in\Theta}\left\lVert \bz_{1it}\frac{\varepsilon_{it}(\mu)}{h^2_{it}(\sigma)}\bz'_{2it} \right\rVert^2                                                                                                                           \\
                                                                                                                                       & \leq  E\sup_{\theta\in\Theta} \left\lvert \frac{\varepsilon_{it}(\mu)}{h_{it}^{1/2}(\sigma)} \right\rvert^2 \left\lVert \frac{\bz_{1it}}{h_{it}^{1/2}(\sigma)} \right\rVert^2 \left\lVert \frac{\bz_{2it}}{h_{it}(\sigma)} \right\rVert^2 \\
                                                                                                                                       & < \infty,                                                                                                                                                                                                                                            \\
            E\sup_{\theta\in\Theta}\left\lVert \frac{\partial^2 l_{it}(\theta)}{\partial\sigma \partial\theta'_{2}} \right\rVert^2   & = E\sup_{\theta\in\Theta}\left\lVert \bz_{2it}\frac{1}{h_{it}^{2}(\sigma)}\left( \frac{1}{2}-\frac{\varepsilon_{it}^{2}(\mu)}{h_{it}(\sigma)} \right)\bz'_{2it} \right\rVert^2                                                                                                      \\
                                                                                                                                       & \leq \frac{1}{2}E\sup_{\theta\in\Theta}\left\lVert \frac{\bz_{2it}}{h_{it}(\sigma)} \right\rVert^2 + E\sup_{\theta\in\Theta}\left\lvert \frac{\varepsilon_{it}^2(\mu)}{h_{it}(\sigma)} \right\rvert\left\lVert \frac{\bz_{2it}}{h_{it}(\sigma)} \right\rVert^2                                                                                                                                                                                                                                        \\
                                                                                                                                       & < \infty.                                                                                                                                                                                                                                            \\
        \end{aligned}
    \end{equation*}
\end{proof}

\begin{lemma}\label{lemma3}
    If the conditions of theorem \ref{theorem:normal} hold, then
    \begin{enumerate}[label=(\roman*), font=\normalfont]
        \item $\displaystyle
                  \sup_{\theta\in\Theta}\left|\frac{1}{T}\sum_{t=1}^{T} L_{t}(\theta) - E L_t(\theta) \right| = o_p(1) ;$
        \item $\displaystyle
                  \sup_{\theta\in\Theta}\left|\frac{1}{T}\sum_{t=1}^{T} \frac{\partial L_{t}(\theta)}{\partial \theta} - E \frac{\partial L_t(\theta)}{\partial \theta} \right| = o_p(1) ;$
        \item $\displaystyle
                  \sup_{\theta\in\Theta}\left\|\frac{1}{T}\sum_{t=1}^{T} \frac{\partial^2 L_t(\theta)}{\partial\theta \partial\theta'} - E \frac{\partial^2 L_{t}(\theta)}{\partial\theta \partial\theta'} \right\| = o_p(1) $.
    \end{enumerate}
\end{lemma}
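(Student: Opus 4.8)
The plan is to deduce all three statements from the uniform law of large numbers in Lemma~\ref{lemma1}, applied entry by entry to the centered summands. Since $L_t(\theta)$, its gradient, and its Hessian depend on the data only through the finite lagged stack $\tilde{\by}_t = (\by_t', \by_{t-1}', \ldots, \by_{t-m}')'$, and $\tilde{\by}_t$ is a measurable time-invariant function of the strictly stationary and ergodic process $\{\by_t\}$ (Assumption~\ref{assumption:y}), the sequence $\{\tilde{\by}_t\}$ is itself strictly stationary and ergodic. Thus each of the three displays has the form $T^{-1}\sum_t g(\tilde{\by}_t,\theta)$ with $g$ centered, and it suffices to verify the two hypotheses of Lemma~\ref{lemma1} for each scalar component.

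For statement (i) I would set $g(\tilde{\by}_t,\theta) = L_t(\theta) - EL_t(\theta)$, which is mean zero by construction; its integrability follows because $E\sup_{\theta\in\Theta}|L_t(\theta)|<\infty$ by Lemma~\ref{lemma2}(i), whence $\sup_{\theta\in\Theta}|EL_t(\theta)|$ is a finite constant and $E\sup_{\theta\in\Theta}|g|\le 2E\sup_{\theta\in\Theta}|L_t(\theta)|<\infty$. Continuity of $g$ in $\theta$ on the compact set $\Theta$ is inherited from $L_t$: because $h_{it}(\sigma)=\bz_{2it}'\sigma$ is bounded away from zero uniformly on $\Theta$ (Assumption~\ref{assumption:space} forces $\omega>\underline{b}_2>0$ while the remaining contributions are nonnegative), both $\log h_{it}(\sigma)$ and $\varepsilon_{it}^2(\mu)/h_{it}(\sigma)$ are continuous in $\theta$. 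Lemma~\ref{lemma1} then yields (i). For (ii) and (iii) I would run the same argument componentwise on each entry of $\partial L_t/\partial\theta - E[\partial L_t/\partial\theta]$ and of $\partial^2 L_t/\partial\theta\partial\theta' - E[\partial^2 L_t/\partial\theta\partial\theta']$; the required domination is exactly Lemma~\ref{lemma2}(ii) and (iii), and continuity of these derivatives in $\theta$ again follows from the strict lower bound on $h_{it}$. Because the parameter dimension $2p+2q+1$ is fixed, componentwise $o_p(1)$ convergence upgrades to $o_p(1)$ in the Euclidean vector norm and in the matrix norm.

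The only point needing care, rather than a genuine obstacle, is to ensure the centering functions $EL_t(\theta)$, $E[\partial L_t(\theta)/\partial\theta]$ and $E[\partial^2 L_t(\theta)/\partial\theta\partial\theta']$ are well defined, finite, and continuous in $\theta$, and that differentiation may be passed under the expectation. All of this is furnished by the $\theta$-uniform integrability bounds of Lemma~\ref{lemma2}: the dominating envelopes are integrable and independent of $\theta$, so dominated convergence gives both finiteness and continuity of the limits and legitimizes interchanging $E$ with $\partial/\partial\theta$. With these preliminaries in place, each statement is an immediate invocation of Lemma~\ref{lemma1}, so the proof reduces to bookkeeping once Lemmas~\ref{lemma1} and~\ref{lemma2} are granted.
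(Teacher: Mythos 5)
Your argument is exactly the paper's: the authors prove Lemma~\ref{lemma3} by a one-line appeal to Lemma~\ref{lemma1}, with the integrability hypotheses supplied by Lemma~\ref{lemma2}, and you have simply spelled out the componentwise verification and the inherited stationarity/ergodicity of the lagged stack. The proposal is correct and adds only the routine bookkeeping the paper leaves implicit.
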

\begin{proof}
    This follows directly from Lemma \ref{lemma1}.
\end{proof}

\begin{lemma}\label{lemma4}
    If the conditions of Theorem \ref{theorem:normal} hold, then \( E L_t(\theta) \) has a unique maximum.
\end{lemma}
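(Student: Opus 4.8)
The plan is to show that $\theta_0$ is the unique maximizer of $E L_t(\theta)$ by studying the difference $E[L_t(\theta)-L_t(\theta_0)]$ via conditioning on the past. First I would condition on the $\sigma$-field $\mathcal{F}_{t-1}$ generated by $\{\by_s : s\le t-1\}$, which makes $\bz_{1it}$ and $\bz_{2it}$ measurable. Writing $h_{0,it}=h_{it}(\sigma_0)=\sigma_0'\bz_{2it}$, $h_{it}=h_{it}(\sigma)$, and $\delta_{it}=(\mu_0-\mu)'\bz_{1it}$, and using the representation $\varepsilon_{it}(\mu)=\eta_{it}\sqrt{h_{0,it}}+\delta_{it}$ together with $E[\eta_{it}\mid\mathcal{F}_{t-1}]=0$ and $E[\eta_{it}^2\mid\mathcal{F}_{t-1}]=1$, I obtain $E[\varepsilon_{it}^2(\mu)\mid\mathcal{F}_{t-1}]=h_{0,it}+\delta_{it}^2$, and hence
\[
    E[\,l_{it}(\theta)-l_{it}(\theta_0)\mid\mathcal{F}_{t-1}\,]
    = -\tfrac{1}{2}\Big(\log\tfrac{h_{it}}{h_{0,it}} + \tfrac{h_{0,it}}{h_{it}} - 1 + \tfrac{\delta_{it}^2}{h_{it}}\Big).
\]

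Second, I would invoke the elementary inequality $x-1-\log x \ge 0$ for $x>0$, with equality iff $x=1$, applied to $x=h_{0,it}/h_{it}$, so that $\log(h_{it}/h_{0,it})+h_{0,it}/h_{it}-1\ge 0$. Since $h_{it}>0$ under Assumption~\ref{assumption:space}, the term $\delta_{it}^2/h_{it}$ is also nonnegative, so the bracketed expression is nonnegative. Taking expectations and summing over $i$ (integrability being guaranteed by Lemma~\ref{lemma2}(i)) gives $E[L_t(\theta)]\le E[L_t(\theta_0)]$, which establishes that $\theta_0$ is a maximizer.

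Third, for uniqueness I would track the equality case. Because every summand is nonnegative, $E[L_t(\theta)]=E[L_t(\theta_0)]$ forces the bracketed term to vanish almost surely for each $i$, which by the strict equality cases above means simultaneously $h_{it}(\sigma)=h_{it}(\sigma_0)$ a.s. and $\delta_{it}=0$ a.s.; equivalently $(\sigma-\sigma_0)'\bz_{2it}=0$ and $(\mu-\mu_0)'\bz_{1it}=0$ almost surely, for all nodes $i$.

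The key part is to deduce from these almost-sure linear relations that $\sigma=\sigma_0$ and $\mu=\mu_0$, i.e. an identifiability argument. This amounts to showing that the regressor second-moment matrices $E[\bz_{1it}\bz_{1it}']$ and $E[\bz_{2it}\bz_{2it}']$ are nonsingular for at least one $i$, so that no nontrivial vector annihilates $\bz_{1it}$ or $\bz_{2it}$ in $L^2$. I would argue this from the absolute continuity of the innovation distribution (the density assumption underlying Proposition~\ref{prop:strictly_stationary}) together with the non-degeneracy of the stationary law of $\{\by_t\}$, which preclude any exact linear dependence among the components $\{\bw_i'\by_{t-r}, y_{i,t-r}\}$ and, respectively, $\{1,\bw_i'\bx_{t-r}, x_{i,t-r}\}$. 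Nonsingularity then yields $\sigma-\sigma_0=0$ and $\mu-\mu_0=0$, completing the proof of uniqueness.
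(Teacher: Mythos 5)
Your first three steps reproduce the paper's argument in an equivalent form: the paper writes
\begin{equation*}
EL_t(\theta) = -\frac{1}{2}\sum_{i=1}^{N}\left( E\log h_{it}(\sigma) + E\frac{h_{it}(\sigma_0)}{h_{it}(\sigma)} \right) - \frac{1}{2}\sum_{i=1}^{N}E\frac{\left[(\mu_0-\mu)'\bz_{1it}\right]^2}{h_{it}(\sigma)},
\end{equation*}
applies the same $x-1-\log x\ge 0$ inequality to $h_{it}(\sigma_0)/h_{it}(\sigma)$, and reduces uniqueness to exactly the two almost-sure relations you derive, namely $(\mu_0-\mu)'\bz_{1it}=0$ and $(\sigma_0-\sigma)'\bz_{2it}=0$. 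Your conditioning on $\mathcal{F}_{t-1}$ is just a different way of computing $E[\varepsilon_{it}^2(\mu)/h_{it}(\sigma)]$, so up to that point the two proofs coincide.

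The genuine gap is in what you yourself call the key part. You reduce identifiability to nonsingularity of $E[\bz_{1it}\bz_{1it}']$ and $E[\bz_{2it}\bz_{2it}']$ and then assert this follows from ``the absolute continuity of the innovation distribution (the density assumption underlying Proposition~\ref{prop:strictly_stationary})'' and ``non-degeneracy of the stationary law.'' Two problems. First, absolute continuity of $\bm{\eta}_t$ is a hypothesis of Proposition~\ref{prop:strictly_stationary}, not of Theorem~\ref{theorem:normal}: the lemma's hypotheses are Assumptions~\ref{assumption:y}--\ref{assumption:epsilon}, which only require stationarity, ergodicity, a fractional moment, and i.i.d.\ innovations with mean zero, unit variance and finite fourth moment --- no density is assumed. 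Second, even granting a density, ``no exact linear dependence'' among the specific regressors $\{\bw_i'\by_{t-r}, y_{i,t-r}\}$ and $\{1,\bw_i'\bx_{t-r}, x_{i,t-r}\}$ is precisely the statement that needs proof, and you give no mechanism for it. The paper supplies one that uses only the stated assumptions: supposing $\lambda'\bz_{1it}\asequal 0$ with some coefficient nonzero, it isolates the most recent lag, expresses some $y_{s,t-1}$ a.s.\ as a linear combination $\varrho_{s(t-1)}$ of quantities from which $\eta_{s,t-1}$ is independent, and derives $E[\eta_{s,t-1}^2\sqrt{h_{s,t-1}}]=E[\eta_{s,t-1}\varrho_{s(t-1)}]=0$, contradicting $E\eta_{s,t-1}^2=1$ and $h_{s,t-1}\ge\omega_0>0$; it then peels off lags $t-2,t-3,\ldots$ recursively, and argues analogously for $\bz_{2it}$. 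Without some argument of this kind your proof establishes that $\theta_0$ is \emph{a} maximizer but not that it is the \emph{unique} one.
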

\begin{proof}
    Firstly, to prove that
    If $ \lambda_1 $ and $ \lambda_2 $ are $ p + q $ and $ 1 + p + q $ dimensional vectors, respectively,
    then for any given $ i $,
    \begin{equation}\label{lemma4:M1}
        \lambda'_1\bz_{1it}\asequal0 \text{ implies } \lambda_1=0
    \end{equation}
    and
    \begin{equation}\label{lemma4:M2}
        \lambda'_2\bz_{2it}\asequal0 \text{ implies } \lambda_2=0
    \end{equation}
    hold.
    In fact, for any given $ i $,
    expand \eqref{lemma4:M1} to
    \begin{equation*}
        \begin{aligned}
            0 \asequal & ~ \lambda_{11}n_i^{-1}\sum_{j\neq i}a_{ij}y_{j(t-1)} + \cdots + \lambda_{1p}n_i^{-1}\sum_{j\neq i}a_{ij}y_{j(t-p)} \\
                       & + \lambda_{1(p+1)}y_{i(t-1)} + \cdots + \lambda_{1(p+q)}y_{i(t-q)}                                                       \\
        \end{aligned}
    \end{equation*}
    If \( \lambda_{11} \neq 0 \), by the definition of model \eqref{eq:NDAR_pq}, there is a node \( s \) such that \( a_{is} = 1 \). 
    In the sense of almost surety, \( y_{i(t-1)} \) can be expressed as a linear combination of 
    \[
    \mathcal{Y}_{i(t-1)} \equiv \left\{ y_{j(t-1)}, j \neq i \right\} \cup \left\{ y_{i(t-r)}, r > 1 \right\},
    \]
    which is denoted as \( \varrho_{s(t-1)} \). 
    By Assumption \eqref{assumption:epsilon}, we know that \( \eta_{s(t-1)} \) is independent of \( \mathcal{Y}_{s(t-1)} \), and therefore independent of \( \varrho_{s(t-1)} \). 
    Thus, we have \( E[\eta_{s(t-1)}^2] = E[\eta_{s(t-1)} \varrho_{s(t-1)}] = 0 \). 
    However, this contradicts the assumption \( E[\eta_{s(t-1)}^2] = 1 \). 
    Hence, \( \lambda_{11} = 0 \).
    Next, if \( \lambda_{1(p+1)} \neq 0 \), then \( y_{i(t-1)} \) can be expressed as a linear combination of \( \mathcal{Y}_{i(t-1)} \). 
    This combination is denoted as \( \varrho_{i(t-1)} \). 
    As above, we can show that \( E[\eta_{i(t-1)}^2] = E[\eta_{i(t-1)} \varrho_{i(t-1)}] = 0 \), 
    which also leads to a contradiction \( E[\eta_{i(t-1)}^2] = 1 \). 
    Hence, \( \lambda_{1(p+1)} = 0 \).
    We have shown that the coefficients associated with time point \( t-1 \) must be zero. 
    That is to say, \( (\lambda_{1,1}, \lambda_{1,p+1}) = (0, 0) \).
    
    Following the same proof procedure, for the coefficients associated with time point \( t-r \), 
    we can prove that \( (\lambda_{1,r}, \lambda_{1,p+r}) = (0, 0) \), \( r=2,3,\ldots \) and so on. 
    By gradual derivation, we can prove that \eqref{lemma4:M1} holds.
    Similarly, we can prove that \eqref{lemma4:M2} holds.

    Given
    \begin{equation}\label{lemma4:eq1}
        \begin{aligned}
            EL_t(\theta) &= -\frac{1}{2}\sum_{i=1}^{N}E\left( \log_{it}(\sigma) + \frac{\varepsilon_{it}^{2}(\mu)}{h_{it}(\sigma)} \right) \\
            &= -\frac{1}{2}\sum_{i=1}^{N}\left( E\log h_{it}(\sigma) + E\frac{h_{it}(\sigma_0)}{h_{it}(\sigma)} \right) - \frac{1}{2}\sum_{i=1}^{N}E\frac{\left[ (\mu_0-\mu)'\bz_{1it} \right]^2}{h_{it}(\sigma)}
        \end{aligned}
    \end{equation}
    Obviously, The second term in \eqref{lemma4:eq1} reaches its maximum value if and only if 
    $ (\mu_0-\mu)'\bz_{1it}=0$ holds almost surely for each $i=1,\cdots,N $.
    The first term in \eqref{lemma4:eq1} is transformed into
    \begin{equation}
        -\frac{1}{2}\sum_{i=1}^{N}\left( -E\log\frac{h_{it}(\sigma_0)}{h_{it}(\sigma)} + E\frac{h_{it}(\sigma_0)}{h_{it}(\sigma)} \right) - \frac{1}{2}\sum_{i=1}^{N}E\log h_{it}(\sigma_0)
    \end{equation}
    It can be seen as a function of \( h_{it}(\sigma_0)/h_{it}(\sigma) \), 
    which reaches its maximum value if and only if $ h_{it}(\sigma_0)/h_{it}(\sigma)=1 $ for every $i=1,\cdots,N $.
    That is, $ (\sigma_0-\sigma)'\bz_{2it}=0 $ is almost surely to hold.
    By \eqref{lemma4:M1} and \eqref{lemma4:M2}, $ \theta_{0} $ is the only maximum point of $EL_t(\theta)$.
\end{proof}

\begin{lemma}\label{lemma5}
    If the conditions of Theorem \ref{theorem:normal} hold, then
    \begin{enumerate}[label= (\roman*), font=\normalfont]
        \item $\Omega_{0}$ is finite and positive definite (a.s.);
        \item $\displaystyle
                  \frac{1}{\sqrt{T}}\sum_{t=1}^{T}
                  \frac{\partial L_t(\theta_0)}{\partial \theta}
                  \longrightarrow_\mathcal{L} N(0, \Sigma_{0}).
              $
    \end{enumerate}
\end{lemma}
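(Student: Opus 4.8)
The plan is to handle the two parts in turn, drawing on the moment bounds already assembled in the proof of Lemma~\ref{lemma2} and the non-degeneracy facts from the proof of Lemma~\ref{lemma4}. For part (i), finiteness of $\Omega_0 = E[\varGamma_{it}(\theta_0)\varGamma'_{it}(\theta_0)]$ (averaged over $i$) is immediate from Lemma~\ref{lemma2}(iii), since each block of $\varGamma_{it}\varGamma'_{it}$ is dominated by an integrable envelope. For positive definiteness I would write out the quadratic form using the block-diagonal structure of $\varGamma_{it}$: for $\lambda = (\lambda_1', \lambda_2')'$ with $\lambda_1\in\R^{p+q}$ and $\lambda_2\in\R^{1+p+q}$,
\[
  \lambda'\Omega_0\lambda = \frac{1}{N}\sum_{i=1}^N E\!\left[\frac{(\lambda_1'\bz_{1it})^2}{h_{it}(\sigma_0)} + \frac{(\lambda_2'\bz_{2it})^2}{2h_{it}^2(\sigma_0)}\right]\ge 0,
\]
with equality only if $\lambda_1'\bz_{1it}\asequal 0$ and $\lambda_2'\bz_{2it}\asequal 0$ for every $i$. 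The non-degeneracy statements \eqref{lemma4:M1} and \eqref{lemma4:M2} then force $\lambda_1 = 0$ and $\lambda_2 = 0$, so $\Omega_0$ is almost surely positive definite.

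For part (ii), the central observation is that the score at the truth is a martingale difference sequence relative to $\mathcal{F}_{t-1} = \sigma(\by_s: s\le t-1)$. Substituting $\varepsilon_{it}(\mu_0) = \eta_{it}\sqrt{h_{it}(\sigma_0)}$ into the first-derivative formulas gives
\[
  \frac{\partial l_{it}(\theta_0)}{\partial\mu} = \frac{\eta_{it}}{\sqrt{h_{it}(\sigma_0)}}\bz_{1it},\qquad
  \frac{\partial l_{it}(\theta_0)}{\partial\sigma} = \frac{\eta_{it}^2-1}{2h_{it}(\sigma_0)}\bz_{2it},
\]
where $\bz_{1it}$, $\bz_{2it}$, and $h_{it}(\sigma_0)$ are all $\mathcal{F}_{t-1}$-measurable. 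Since $E[\eta_{it}\mid\mathcal{F}_{t-1}]=0$ and $E[\eta_{it}^2-1\mid\mathcal{F}_{t-1}]=0$ by Assumption~\ref{assumption:epsilon}, each summand has conditional mean zero, so $\partial L_t(\theta_0)/\partial\theta = \sum_i \partial l_{it}(\theta_0)/\partial\theta$ is an $\mathcal{F}_t$-martingale difference, and strict stationarity and ergodicity of $\{\by_t\}$ (Assumption~\ref{assumption:y}) transfer to it.

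Next I would compute the one-step conditional second moment. Because $\{\eta_{it}\}$ is independent across $i$, the cross-node contributions $E[(\partial l_{it}/\partial\theta)(\partial l_{jt}/\partial\theta)'\mid\mathcal{F}_{t-1}]$ vanish for $i\ne j$, leaving only within-node terms. Using $E\eta_{it}^2=1$, $E[\eta_{it}(\eta_{it}^2-1)]=\kappa_3$, and $E[(\eta_{it}^2-1)^2]=\kappa_4-1$, a direct block computation shows
\[
  E\!\left[\frac{\partial l_{it}(\theta_0)}{\partial\theta}\frac{\partial l_{it}(\theta_0)}{\partial\theta'}\,\Big|\,\mathcal{F}_{t-1}\right] = \varGamma_{it}(\theta_0)\,D\,\varGamma'_{it}(\theta_0),
\]
so that summing over $i$ and taking expectations recovers the score covariance $\sum_{i=1}^N E[\varGamma_{it}(\theta_0)D\varGamma'_{it}(\theta_0)]=N\Sigma_0$, consistent with the definition of $\Sigma_0$. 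Finiteness of these entries follows from \eqref{ieq:z1_norm2} and \eqref{ieq:z2_norm1} together with $E\|\by_t\|^u<\infty$, with Cauchy--Schwarz controlling the $\kappa_3$ cross-block.

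With the martingale-difference property and finite second moments established, I would conclude via the Cram\'er--Wold device: for each fixed $c$, $\{c'\,\partial L_t(\theta_0)/\partial\theta\}$ is a strictly stationary ergodic scalar martingale difference with finite variance $c'\Sigma_0 c$, so the Billingsley--Ibragimov martingale central limit theorem for stationary ergodic sequences yields $T^{-1/2}\sum_{t=1}^T c'\,\partial L_t(\theta_0)/\partial\theta \to_{\mathcal L} N(0, c'\Sigma_0 c)$, and joint convergence follows. The hard part will be the covariance bookkeeping: confirming that the cross-node terms drop out under independence of $\{\eta_{it}\}$ and that the within-node blocks reassemble exactly into the $\varGamma_{it}D\varGamma'_{it}$ form carrying $\kappa_3$ and $\kappa_4$. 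Once stationarity, ergodicity, and the $L^2$ bound are in place, the CLT itself is a routine invocation.
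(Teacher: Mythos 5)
Your proposal is correct and follows essentially the same route as the paper: part (i) via the block quadratic form plus the non-degeneracy facts \eqref{lemma4:M1}--\eqref{lemma4:M2}, and part (ii) via the martingale-difference structure of the score, the Cram\'er--Wold device, and a martingale CLT (you invoke the stationary-ergodic Billingsley--Ibragimov version where the paper checks the Lindeberg condition directly, and you spell out the conditional covariance computation that the paper leaves implicit). No substantive difference.
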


\begin{proof}
    (i)
    From the proof of \eqref{lemma2}, the finiteness of $ \Omega_{0} $ is obvious.
    For any $ \lambda=(\lambda'_1, \lambda'_2)' \in \mathbb{R}^{p+q+1}$, 
    \begin{equation*}
        \lambda'\Omega\lambda = E \left[ \sum_{i=1}^{N} \left\{ \lambda'_{1} \frac{\bz_{1it}\bz'_{1it}}{h_{it}(\sigma_0)} \lambda_1 + \lambda'_{2} \frac{\bz_{2it}\bz'_{2it}}{2h_{it}(\sigma_0)} \lambda_2 \right\} \right] \geq 0
    \end{equation*}
    Given \eqref{lemma4:M1} and \eqref{lemma4:M2}, we have
    \begin{equation*}
        \begin{aligned}
            \lambda'\Omega_0\lambda \asequal 0 ~\Leftrightarrow~ & \lambda'_{1} \bz_{1it}\bz'_{1it} \lambda_1 \asequal 0,~ \lambda'_{2} \bz_{2it}\bz'_{2it} \lambda_2 \asequal 0 \\
            ~\Leftrightarrow~                           & \lambda'_{1} \bz_{1it} \asequal 0,~ \lambda'_{2} \bz_{2it}\asequal 0                                       \\
            ~\Leftrightarrow~                           & \lambda_{1}=0,~ \lambda_{2}=0
        \end{aligned}
    \end{equation*}
    Therefore (i) is proved.

    (ii)
    For any $ \lambda\in\mathbb{R}^d $, 
    let $ \mathcal{F}_{t}=\sigma(\bm{y}_{-m}, \cdots, \bm{y}_{t}, \bm{\eta}_{1}, \cdots, \bm{\eta}_{t}) $
    and 
    \begin{equation}
        \lambda'\frac{1}{\sqrt{T}}\sum_{t=1}^{T}\frac{\partial L_t(\theta_0)}{\partial \theta}
        = \sum_{t=1}^{T} \lambda' \frac{1}{\sqrt{T}} \sum_{i=1}^{N} \frac{\partial l_{it}(\theta_0)}{\partial \theta}
        \stackrel{\Delta}{=} \sum_{t=1}^{T} K_{t},
    \end{equation}
    then $ E\left[ K_{t} | \mathcal{F}_{t-1}\right] = 0 $,
    By the Assumption \ref{assumption:y},
    It is easy to prove that
    \begin{equation}
            \sum_{t=1}^{T} E\left[ K_{t}^2 \middle| \mathcal{F}_{t-1}\right] 
            = \sum_{t=1}^{T} \frac{1}{T} \lambda' E \left[ S_{t}(\theta_0)S'_{t}(\theta_0) \middle| \mathcal{F}_{t-1} \right] \lambda
            \to_{p} \lambda' \Sigma_0 \lambda.
    \end{equation}
    For any fixed $ \delta>0 $, we have
    \begin{equation}
        \begin{aligned}
            \sum_{t=1}^{T}E\left[ K_{t}^2 \mathbb{I}_{\{|K_t|\geq \delta\}} \right] &= \frac{1}{T}\sum_{t=1}^{T} E\left[ \lambda'S_t(\theta_0)S'_t(\theta_0)\lambda \mathbb{I}_{\{|\lambda'S_t(\theta_0)S'_t(\theta_0)\lambda|\geq T\delta\}} \right]  \\
            &=E\left[ \lambda'S_t(\theta_0)S'_t(\theta_0)\lambda \mathbb{I}_{\{|\lambda'S_t(\theta_0)S'_t(\theta_0)\lambda|\geq T\delta\}} \right] \stackrel{P}{\longrightarrow} 0 .\\
        \end{aligned}
    \end{equation}
    By the martingale central limit theorem, 
    $ \sum_{t=1}^{T}K_{t} \rightarrow_{\mathcal{L}} N(0, \lambda\Sigma_0\lambda') $.
    Hence by Cramér-Wold device
    \begin{equation}
        \frac{1}{\sqrt{T}}\frac{\partial L_t(\theta_0)}{\partial \theta} \longrightarrow_{\mathcal{L}} N(0,\Sigma_0)
    .\end{equation}
    Formular (ii) is proved.
\end{proof}

\begin{proof}[\textbf{Proof of Theorem \ref{theorem:normal}}]
Now, combine Lemma \ref{lemma2}-\ref{lemma4}, all conditions of Theorem 4.1.1 in Amemiya1985 are satisfied,
then $ \hat{\theta}\to_{\mathcal{P}}\theta_{0} $.
For any sequence \( \tilde{\theta} \) such that \( \tilde{\theta}\to\theta_0 \) in probability, 
by Lemma \ref{lemma3} (iii),
\[
    \frac{1}{T}\sum_{t=1}^{T} \frac{\partial^2 L_t(\tilde{\theta})}{\partial\theta\partial\theta'}  \to_\mathcal{P} \Omega_0.
\]
And with Lemma \ref{lemma5}, all conditions of Theorem 4.1.3 in Amemiya1985 are satisfied,
then Theorem \ref{theorem:normal} is proved.
\end{proof}

\subsection*{Proof of Lemma \ref{lemma:higher_space_estimator}}
In this lemma, 
some of \( \{\phi_r\} \) and \( \{\psi_r\} \) in volatility component 
are allow in the space boundarie, 0.
Then in the proof of Theorem \ref{theorem:normal},
using positive lower bound of theses parameters to ensure the finiteness of the likelihood function,
the score function and the Hessian matrix
are replaced by assumption \( E\left\lVert \by_t \right\rVert^6<\infty \) to ensure.

By the argument of Lemma \ref{lemma2}, we have
\begin{enumerate}[label=(\roman*), font=\normalfont]
    \item $\displaystyle
              E\sup_{\theta\in\Theta}\left\lvert L_{t}(\theta) \right\rvert < \infty ; $
    \item \(\displaystyle E\sup_{\theta\in\Theta} \left\lVert \frac{\partial L_{t}(\theta)}{\partial\theta} \right\rVert < \infty ; \)
    \item $\displaystyle
              E\sup_{\theta\in\Theta} \left\lVert \frac{\partial^2 L_{t}(\theta)}{\partial\theta \partial\theta'} \right\rVert < \infty , $
\end{enumerate}
where the order \( 6 \) of \( E\left\lVert \by_t \right\rVert^6 \) are appear in (iii).
Similarly, Lemma \ref{lemma3}, Lemma \ref{lemma4}, Lemma \ref{lemma5} can be proved.

With the conditions of Theorem \ref{theorem:selector},
$ \hat{\theta}\to_{\mathcal{P}}\theta_{0} $ still holds.
Denote \( \lVert \bm{a} \rVert _B = \bm{a}'B\bm{a} \),
\( F_T(\theta) = T^{-1}\sum_{t=1}^{T}L_t(\theta) \),
\[
    S_T(\theta) = \frac{\partial F_T(\theta)}{\partial \theta},
    \quad
    \Omega_T(\theta) = \frac{\partial^2 F_T(\theta)}{\partial\theta \partial\theta'},
\]
and 
\[
    \Omega_T = \Omega_T(\theta_0), \quad Z_T = -\Omega_T^{-1}T^{1/2}S_T(\theta_0).
\]
By \ref{lemma3} (iii) and \ref{lemma5} (i), 
\( \Omega_0 \) is positive definite and \( \Omega_T - \Omega_0 = o_p(1) \). 
Then, by Lemma 2.3 in Knopov and Korkhin (2011), it is not
hard to show that \( \Omega_T \) is a positive definite matrix (a.s.) for sufficiently large \( T \).
By \ref{lemma5} (ii), we have \( Z_T = O_p(1) \).
Using Taylor's expansion for we get
\begin{equation}
    \begin{aligned}
        F_T(\theta) &= F_T(\theta_0) + S_T(\theta)'(\theta-\theta_0) + \frac{1}{2}(\theta-\theta_0)'\Omega_T(\theta-\theta_0) + R_T(\theta) \\
                    &= F_T(\theta_0) + \frac{1}{2T} \lVert Z_n - \sqrt{T}(\theta-\theta_0) \rVert _{\Omega_T}^{2} - \frac{1}{2T}\left\lVert Z_n \right\rVert _{\Omega_T}^{2} + R_T(\theta) \\
    \end{aligned}
\end{equation}
where \( R_T(\theta) = \frac{1}{2}(\theta-\theta_0)'(J_T - J_T(\tilde{\theta}))(\theta-\theta_0) \), 
and \( \tilde{\theta} \) is between \( \theta \) and \( \theta_0 \).
When \( \theta=\hat{\theta} \), by Lemma \ref{lemma3} (iii) , 
the difference of second-order derivatives tends to zero in probability as \( T \) tends to infinity.
Hence \( R_T(\hat{\theta}) = o_p(\lVert \hat{\theta}-\theta_0 \rVert _{J_T}^{2}) \).
We then have
\begin{equation}\label{eq:F-F}
    \begin{aligned}
        F_T(\hat\theta) - F_T(\theta_0) & = \frac{1}{2T} \lVert Z_n - \sqrt{T}(\hat\theta-\theta_0) \rVert _{\Omega_T}^{2} - \frac{1}{2T}\left\lVert Z_n \right\rVert _{\Omega_T}^{2} + R_T(\hat\theta)                                     \\
                                        & = \frac{1}{2T} \left\{ \lVert Z_n - \sqrt{T}(\hat\theta-\theta_0) \rVert _{\Omega_T}^{2} - \left\lVert Z_n \right\rVert _{\Omega_T}^{2} + o_p(\lVert \sqrt{T}(\hat{\theta}-\theta_0) \rVert _{J_T}^{2}) \right\} \\
                                        & \leq 0
    \end{aligned}
\end{equation}
because \( \hat{\theta} \) minimizes \( F_T(\theta) \).
It follows that
\begin{equation}\label{ieq:Z22}
  \begin{aligned}
    \lVert Z_n - \sqrt{T}(\hat\theta-\theta_0) \rVert _{\Omega_T}^2 & \leq \lVert Z_n \rVert _{\Omega_T}^2 + o_p(\lVert \sqrt{T}(\hat{\theta}-\theta_0) \rVert _{J_T}^{2}) \\
    & \leq \left\{ \lVert Z_n \rVert _{\Omega_T} + o_p(\lVert \sqrt{T}(\hat{\theta}-\theta_0) \rVert _{J_T}) \right\}^{2}
  \end{aligned}
\end{equation}
since \( Z_T = O_p(1) \). By the triangle inequality and \eqref{ieq:Z22} we deduce
\[
    \begin{aligned}
        \lVert \sqrt{T}(\hat\theta-\theta_0) \rVert _{\Omega_T} & \leq \lVert \sqrt{T}(\hat\theta-\theta_0) - Z_T \rVert _{\Omega_T} + \lVert Z_T \rVert _{\Omega_T} \\
                                                                & \leq 2\lVert Z_T \rVert _{\Omega_T} +  o_p(\lVert \sqrt{T}(\hat{\theta}-\theta_0) \rVert _{J_T})
    \end{aligned}
\]
Thus \( \lVert \sqrt{T}(\hat\theta-\theta_0) \rVert _{\Omega_T}(1+o_p(1)) \leq 2\lVert Z_T \rVert _{\Omega_T} = O_p(1) \),
namely, \( \lVert \sqrt{T}(\hat\theta-\theta_0) \rVert = O_p(1) \).
Thus, From \eqref{ieq:Z22}, we have
\( \lVert Z_n - \sqrt{T}(\hat\theta-\theta_0) \rVert _{\Omega_T} = O_p(1) \).
Therefore, by the 2nd line in \eqref{eq:F-F}, 
we have \( F_T(\hat\theta) - F_T(\theta_0) = O_p(1/T) \),
so \( \sum_{t=1}^{T}L_t(\hat\theta) - \sum_{t=1}^{T}L_t(\hat\theta_0) = T(F_T(\hat\theta) - F_T(\theta_0)) = O_p(1) \).

\subsection*{Proof of Theorem~\ref{theorem:selector}}
By Lemma \ref{lemma:higher_space_estimator}
and similar argument in the proof of Theorem 3 in \cite{lin2024vector},
we can prove this theorem.

\end{document}